\newtheorem{theorem}{Theorem}[section]
\newtheorem{conjecture}[theorem]{Conjecture}
\newtheorem{definition}[theorem]{Definition}
  \newtheorem{proposition}[theorem]{Proposition}
\newtheorem{lemma}[theorem]{Lemma}
\newtheorem{corollary}[theorem]{Corollary}
\newcommand{\prooff}[1]{\textit{Proof of Theorem#1.}}
\begin{document}
\thispagestyle{empty}
\begin{center}
\Large
 {\bf Structural properties of biclique graphs and the distance formula}
\vspace*{.5cm}

\large
Marina Groshaus\footnote{Partially supported by CNPq (428941/2016-8). CONICET / Universidad de Buenos Aires, Argentina.}\textsuperscript{,}\footnote{\label{note1}The authors are deeply grateful to the reviewers for their useful and very detailed comments.}\\
 \normalsize
UTFPR-DAINF, Brazil\\
\texttt{marinagroshaus@yahoo.es} \\

\vspace*{.5cm}

\large
Leandro Montero\textsuperscript{\ref{note1}}\\
 \normalsize
KLaIM team, L@bisen, AIDE Lab., Yncrea Ouest\\33 Q, Chemin du Champ de Man\oe{}uvres\\44470 Carquefou, France\\
\texttt{lpmontero@gmail.com}\\
\vspace*{.5cm}

ABSTRACT
\end{center}

\small A \textit{biclique} is a maximal induced complete bipartite subgraph of $G$.
The \textit{biclique graph} of a graph $G$, denoted by $KB(G)$, is the intersection graph of the family of all bicliques of $G$.
In this work we study some structural properties of biclique graphs which are necessary conditions for a graph to be a biclique graph. In particular, we prove that for biclique graphs that are neither a $K_3$ nor a \textit{diamond}, the number of vertices of degree $2$ is less than half the number of vertices in the graph. Also, we present forbidden structures. 
For this, we introduce a natural definition of the distance between bicliques in a graph. We give a formula 
that relates the distance between bicliques in a graph $G$ and the distance between their respective vertices in $KB(G)$. 
Using these results, we can prove not only this new necessary condition involving the degree, but also that some graphs are not biclique graphs. For example, we show that the \textit{crown} is the smallest graph that is not a biclique graph although the known necessary condition for biclique graphs holds, answering an open problem about biclique graphs. Finally, we present some interesting related conjectures and open problems.

\normalsize
\vspace*{.5cm}
{\bf Keywords:} Bicliques; Biclique graphs; Distances in graphs; Intersection graphs; Forbidden structures
\newpage

\section{Introduction}

Intersection graphs of certain special subgraphs of a general graph have been studied 
extensively. We can mention line graphs (intersection graphs of the edges of a graph), interval graphs (intersection graphs
of subpaths of a path), and in particular, clique graphs (intersection graphs of the cliques of a graph) 
\cite{BoothLuekerJCSS1976,BrandstadtLeSpinrad1999,EscalanteAMSUH1973,FulkersonGrossPJM1965, GavrilJCTSB1974, LehotJA1974,McKeeMcMorris1999}.

The \textit{clique graph} of $G$, denoted by $K(G)$, is the intersection graph of the family of all cliques of $G$.
Clique graphs were introduced by Hamelink in~\cite{HamelinkJCT1968} and characterized in~\cite{RobertsSpencerJCTSB1971}. It was proved in~\cite{Alc'onFariaFigueiredoGutierrez2006} 
that the clique graph recognition problem is NP-Complete.

Bicliques have been studied in many contexts. Depending on the context and the author, bicliques are defined in different ways: induced or not induced subgraphs, maximal or not, etc.
(\cite{extra3,extra4,extra5,extra7,extra1,extra6,extra2}).
All of them are rather natural and clearly justified. In our work, we consider bicliques as being maximal induced complete bipartite subgraphs. 

Bicliques have applications in various fields, for example, biology: protein-protein interaction networks~\cite{Bu01052003}, 
social networks: web community discovery~\cite{Kumar}, genetics~\cite{Atluri}, medicine~\cite{Niranjan}, information theory~\cite{Haemers200156}. 
More applications (including some of these) can be found in~\cite{blablamec}.

The \textit{biclique graph} of a graph $G$, denoted by $KB(G)$, is the intersection graph of the family of all bicliques of $G$.
It was defined and characterized in~\cite{GroshausSzwarcfiterJGT2010}. However, no polynomial time algorithm is known for recognizing biclique graphs. Biclique graphs have been studied over the last few years. See~\cite{Coudert2016,grafobic1,grafobic2} for some examples of recent articles on the subject.

In this work we study structural properties of biclique graphs. For this, we introduce the concept of the distance between bicliques in a graph. Previous related work in the context of cliques can be found in~\cite{Balakrishnan,hedman,hedman2,peyrat,PizanaDAM2004}.

In~\cite{GroshausSzwarcfiterJGT2010}, a necessary condition for a graph to be a biclique graph was given. It was an open problem whether this condition was sufficient.  
Using the distances formula, we give a different proof for this necessary condition\footnote{We remark that, although the original proof is short, the new proof shows how to use this formula to solve problems in biclique graphs.}. Moreover, we prove that this necessary condition is not sufficient, that is, we present some structural properties and forbidden structures which allow us to identify graphs that verify the condition but they are not biclique graphs. 
Finally, we present a new necessary condition of biclique graphs: given a biclique graph that is not a $K_3$ or a diamond, we prove that the number of vertices of degree $2$ is 
strictly less than half the number of vertices in the graph. Consequently, we give some forbidden structures. 

We hope that these tools give some light to the main open problem in the context of biclique graphs, that is, the recognition of this class. In the appendix, we present a complete list of all biclique graphs up to $6$ vertices.

This work is organized as follows. In Section $2$ the notation is given. Section $3$ contains some preliminary and general properties. In Section $4$ we present the relation between distances in graphs and biclique graphs plus some structural properties of biclique graphs. Section $5$ contains a bound on the number of vertices of degree $2$ for biclique graphs and forbidden structures.
In the last two sections we present some open and interesting related problems, and we end with a concluding one.

\section{Preliminaries}

Throughout the paper we restrict to undirected simple graphs. Let $G=(V,E)$ be a graph with vertex set $V(G)$ and edge set $E(G)$, and 
let $n=|V(G)|$ and $m=|E(G)|$. A \textit{subgraph} $G'$ of $G$ is a graph $G'=(V',E')$, where $V'\subseteq V$ and 
$E'\subseteq E$ such that all endpoints of the edges of $E'$ are in $V'$. When $E'$ has all the edges of $E$ whose endpoints belong to the vertex subset $V'$, we say that $V'$ \textit{induces} the subgraph $G'=(V',E')$, that is, $G'$ is an \textit{induced subgraph} of $G$. 
A graph $G=(V,E)$ is \textit{bipartite} when there exist sets $U$ and $W$ such that $V= U \cup W$, $U \cap W = \emptyset$, $U \neq \emptyset$, $W \neq \emptyset$ and $E \subseteq U \times W$. Say that $G$ is a 
\textit{complete graph} when every possible edge belongs to $E$. A complete graph on $n$ vertices is denoted $K_{n}$. A bipartite graph is \textit{complete bipartite} when
every vertex of the first set is connected to every vertex of the second set. A complete bipartite graph on $p$ vertices in one set and $q$ vertices in the other is denoted $K_{p,q}$.
A \textit{clique} of $G$ is a maximal complete induced subgraph, while a \textit{biclique} is a maximal induced complete bipartite subgraph of $G$. 
The \textit{open neighborhood} of a vertex $v \in V(G)$, denoted $N(v)$, is the set of vertices adjacent to $v$. 
The \textit{closed neighborhood} of a vertex $v \in V(G)$, denoted $N[v]$, is the set $N(v) \cup \{v\}$.
The \textit{degree} of a vertex $v$, denoted by $d(v)$, is defined as $d(v) = |N(v)|$. A vertex $v\in V(G)$ is \textit{simplicial} if $N[v]$ is a clique. 
A vertex $v\in V(G)$ is \textit{universal} if it is adjacent to all of the other vertices in $V(G)$. A \textit{path} of $k$ vertices, denoted by $P_{k}$, is a sequence of vertices 
$v_{1}v_{2}\ldots v_{k} \in V(G)$ such that $v_{i} \neq v_{j}$ for all $1 \leq i \neq j \leq k$ and $v_{i}$ is adjacent to $v_{i+1}$ 
for all $1 \leq i \leq k-1$. A graph is \textit{connected} if there exists a path between each pair of vertices. The distance between two vertices $v,w \in G$ is defined
as the number of edges in a shortest path between them and is denoted by $d_G(v,w)$. Whenever no confusion arises, we will simply write $d(v,w)$ instead of $d_G(v,w)$. 
We assume that all the graphs of this paper are connected.

A \textit{diamond} is a complete graph with $4$ vertices minus an edge. A \textit{gem} is an induced path with $4$ vertices plus a 
universal vertex.

Given a family of sets $\mathcal{H}$, the \textit{intersection graph} of $\mathcal{H}$ is a graph that has the members of 
$\mathcal{H}$ as vertices and there is an edge between two sets $E,F\in\mathcal{H}$ when $E\cap F \neq \emptyset$.
A graph $G$ is an \textit{intersection graph} if there exists a family of sets $\mathcal{H}$ such that $G$ is the intersection 
graph of $\mathcal{H}$.

\section{General properties}\label{generalProps}

In this section we present some properties of the biclique graph related to connectivity.
First we recall the theorem in~\cite{GroshausSzwarcfiterJGT2010} that gives a necessary condition for a graph to be a biclique graph.

\begin{theorem}[\cite{GroshausSzwarcfiterJGT2010}]\label{tMarina}
Let $G$ be a graph such that $G=KB(H)$, for some graph $H$. Then every induced $P_3$ of $G$ is contained in an induced diamond or an induced gem of $G$ as shown in Figure~\ref{FigtMarina}.
\end{theorem}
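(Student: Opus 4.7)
Starting with an induced $P_3$ $B_1 - B_2 - B_3$ in $G = KB(H)$, I would select witnesses $v \in B_1 \cap B_2$ and $w \in B_2 \cap B_3$. Since $B_1 \cap B_3 = \emptyset$, we have $v \neq w$, so $v$ and $w$ are two distinct vertices of the biclique $B_2$. They either lie on opposite sides of the bipartition of $B_2$ (in which case $vw \in E(H)$) or on the same side ($vw \notin E(H)$); the plan is to treat these two situations in parallel.

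The main strategy is to look for a fourth biclique $B_4$ that is adjacent in $G$ to each of $B_1, B_2, B_3$. Any such $B_4$ is automatically distinct from those three vertices and so makes $\{B_1, B_2, B_3, B_4\}$ an induced diamond, since $B_1 B_3$ is a non-edge. The cleanest candidate is a biclique $B_4 \neq B_2$ containing both $v$ and $w$: it is automatically distinct from $B_1$ (because $w \notin B_1$) and from $B_3$ (because $v \notin B_3$), and is adjacent to $B_1, B_2, B_3$ through $v, v, w$ respectively. When no such $B_4$ exists, one can still search for a mixed witness: a biclique containing $w$ together with some vertex of $B_1$, or $v$ together with some vertex of $B_3$; I would try to produce one by adjoining a carefully chosen neighbour of $v$ or $w$ taken from $B_1$ or $B_3$, and then extending maximally to a biclique.

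If every such attempt fails, the failure itself forces a rigid structure: $B_2$ has to be essentially the unique maximal bipartite complete subgraph capturing the adjacencies between $v, w$ and the relevant vertices of $B_1$ and $B_3$. I would exploit this rigidity, together with the maximality of $B_1$ and $B_3$, to manufacture two additional bicliques $B_4$ and $B_5$ from specific bipartite complete subgraphs rooted at vertices of $B_1 \cup B_3$, and then check that $\{B_1, B_2, B_3, B_4, B_5\}$ induces a gem in $G$, with each of the seven gem-edges witnessed by a shared vertex of $H$ and the two non-edges (in particular the original $B_1 B_3$) preserved.

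The main obstacle I foresee is exactly this last half: constructing $B_4, B_5$ when the easy diamond is unavailable and verifying that the resulting five bicliques realise the gem adjacency pattern without spurious extra edges. This is where the maximality of every biclique in the picture must be used most carefully, and it is the combinatorial heart of the argument; by contrast, the diamond half only needs a single biclique to be exhibited and shown distinct from $B_1, B_2, B_3$.
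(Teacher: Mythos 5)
Your opening moves are sound and match the easy half of the argument: picking $v\in B_1\cap B_2$, $w\in B_2\cap B_3$, noting $v\neq w$ because $B_1\cap B_3=\emptyset$, and observing that any biclique $B_4\neq B_2$ of $H$ containing both $v$ and $w$ is automatically a fourth vertex adjacent to $B_1,B_2,B_3$ and hence completes an induced \textit{diamond}. But the entire difficulty of the theorem lives in the case where no such $B_4$ exists, and there your proposal stops being a proof: ``search for a mixed witness,'' ``exploit this rigidity,'' and ``manufacture two additional bicliques $B_4$ and $B_5$'' are statements of intent, not constructions. You never exhibit the bicliques, never prove they exist, and never verify the gem adjacencies; you yourself flag this as the ``combinatorial heart'' you have not carried out. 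As written, the gem half of the theorem is unproven.

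For comparison, the paper closes exactly this gap with two pieces of machinery. First, a distance formula $d_{KB(H)}(B,B')=\lfloor (d_H(B,B')+1)/2\rfloor+1$ reduces the problem to the two cases $d_H(B_1,B_3)=1$ and $d_H(B_1,B_3)=2$. Second, a separate theorem guarantees that two bicliques at distance $k>0$ have at least $k+1$ bicliques ``between'' them; its proof is a careful case analysis on the adjacencies among $v$, $w$, a neighbour $x$ of $v$ inside $B_1$, and a neighbour $y$ of $w$ inside $B_3$ (for $k=1$), and a shortest-path argument producing the triples $\{v_i,v_{i+1},v_{i+2}\}$ plus two extra bicliques anchored in $B_1$ and $B_3$ (for $k\geq 2$). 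Those explicitly constructed intermediate bicliques are precisely the $Z_1,Z_2$ whose intersection pattern with $B_1$ and $B_3$ decides between the \textit{diamond} and the \textit{gem}. That case analysis is the content you would need to supply. One further small slip: a \textit{gem} on five vertices has seven edges and \emph{three} non-edges, not two, so your final verification step is also miscounted.
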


 \begin{figure}[ht!]
	\centering
	\includegraphics[trim= 0mm 0mm 0mm 10mm, scale=.4]{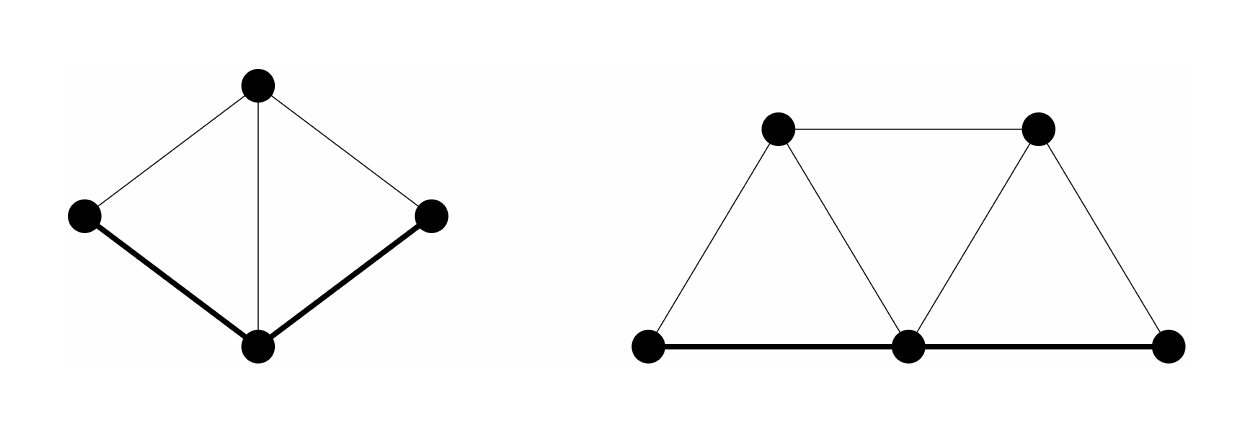}
	\caption{Induced $P_3$ in bold edges contained in a diamond and in a gem respectively.}
	\label{FigtMarina}
\end{figure}

In Section~\ref{sectDist} we give a different proof of Theorem~\ref{tMarina}.
One question that arises from Theorem~\ref{tMarina} is: Given a graph $G$ such that every induced $P_3$ is contained in
a diamond or in a gem; is $G=KB(H)$ for some graph $H$? In Section~\ref{sectDist} we show that the answer is ``No'',
by proving a result that allows us to construct graphs that have every induced $P_3$ in
a diamond or in a gem although they are not biclique graphs. 
    
Next we show the connectivity relation between $G$ and $KB(G)$.

\begin{proposition}\label{p1}
Let $G$ be a graph. $G$ is connected if and only if $KB(G)$ is connected.
\end{proposition}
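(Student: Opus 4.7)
The plan is to prove each direction separately, using the observation that every edge of $G$ lies in at least one biclique (since any edge, viewed as a $K_2$, can be extended to a maximal induced complete bipartite subgraph).

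For the forward direction, suppose $G$ is connected and let $B_1,B_2$ be two arbitrary bicliques of $G$; I want to exhibit a walk between them in $KB(G)$. Choose vertices $v_1\in B_1$ and $v_2\in B_2$ and take a shortest path $v_1=u_0,u_1,\dots,u_k=v_2$ in $G$. For each $i=0,\dots,k-1$, fix a biclique $C_i$ containing the edge $u_iu_{i+1}$. Consecutive bicliques $C_{i-1},C_i$ share the vertex $u_i$, so they are adjacent (or equal) in $KB(G)$; moreover $B_1$ and $C_0$ share $u_0=v_1$ and $B_2$ and $C_{k-1}$ share $u_k=v_2$, so the sequence $B_1,C_0,C_1,\dots,C_{k-1},B_2$ yields the desired walk in $KB(G)$.

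For the backward direction, I would argue by contrapositive. Assume $G$ is disconnected, with (at least) two connected components $G_1$ and $G_2$ containing bicliques. Since every biclique is a connected bipartite subgraph, each biclique of $G$ is contained in exactly one connected component of $G$; in particular a biclique in $G_1$ and a biclique in $G_2$ share no vertex and are therefore non-adjacent in $KB(G)$. Consequently the vertex set of $KB(G)$ splits into at least two non-empty parts with no edges between them, so $KB(G)$ is disconnected.

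The argument is largely a routine intersection-graph translation, and the only delicate point is the easy but essential fact that every edge of $G$ belongs to some biclique, which keeps the forward construction from breaking. Corner cases involving isolated vertices (which belong to no biclique) fit naturally with the paper's standing assumption that the graphs under consideration are connected and contain at least one biclique, so I do not expect further obstacles.
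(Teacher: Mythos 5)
Your proof is correct and follows essentially the same route as the paper: the forward direction in both cases covers a shortest path in $G$ by a chain of pairwise-intersecting bicliques, yielding a walk in $KB(G)$. The only cosmetic difference is that you cover the path edge by edge (using that every edge extends to a biclique), whereas the paper covers it by consecutive triples $\{v_i,v_{i+1},v_{i+2}\}$ (using that each such triple induces a $P_3$ and hence lies in a biclique); for the converse, the paper simply says it is clear, while you spell out the contrapositive.
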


\begin{proof} Suppose $G$ is connected. Let $B$ and $B'$ be bicliques of $G$.
If $B$ intersects $B'$ then their corresponding vertices in $KB(G)$ are adjacent.
If they do not intersect, as $G$ is connected, there is a path between each vertex of $B$ and $B'$.
Let $b \in B,$ $b' \in B'$ such that $d(b,b') = \min \{d(v,w) $ / $ v \in B$, $w \in B'\}$.
Let $k=d(b,b')$. Clearly $k>0$, so take a path $P=bv_1\ldots v_{k-1}b'$ of length $k$ between $b$ and $b'$.
Now, each triple of consecutive vertices of $P$ is contained in a different biclique since both endpoints of each triple
are not adjacent. Finally taking the bicliques that contain the following triples 
$\{b,v_1,v_2\}, \{v_2,v_3,v_4\},\ldots,\{v_{k-2},v_{k-1},b'\}$ for $k$ even and
$\{b,v_1,v_2\}, \{v_2,v_3,v_4\},\ldots,\{v_{k-3},v_{k-2},v_{k-1}\},\{v_{k-2},v_{k-1},b'\}$ for $k$ odd (note that for $k=1$, the edge $bb'$
is in a biclique that intersects both $B$ and $B'$),
we have that each biclique only intersects with the previous and the following
one. Therefore, their corresponding vertices in $KB(G)$ form a path between the vertices corresponding to $B$ and $B'$. Hence $KB(G)$ is connected.

The converse is clear.
\end{proof}

The following result is a direct consequence of Theorem~\ref{tMarina}.

\begin{lemma}\label{l2}
Let $G$ be a connected graph such that $G=KB(H)$, for some graph $H$. If $G$ has at least $3$ vertices, then $d(v) \geq 2$ for all $v \in V(G)$. Moreover, $G$ is $2$-connected.
\end{lemma}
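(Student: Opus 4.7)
The plan is to derive both statements directly from Theorem~\ref{tMarina}, by exhibiting, for each hypothetical vertex of degree less than $2$ or for each hypothetical cut vertex, an induced $P_3$ whose containment in a \emph{diamond} or \emph{gem} yields the required contradiction.

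For the degree statement, I will argue by contradiction and suppose there is a vertex $v$ with $d(v)\le 1$ in a connected graph $G$ with $|V(G)|\ge 3$. If $d(v)=0$ then $G$ is disconnected, contrary to Proposition~\ref{p1} applied to the connected $H$ with $KB(H)=G$ (or else we observe directly that biclique graphs of connected graphs are connected). So $d(v)=1$; let $u$ be its unique neighbor. Since $|V(G)|\ge 3$ and $G$ is connected, $u$ has a neighbor $w\ne v$, and since $v$ has no neighbor other than $u$, the vertices $v,u,w$ induce a $P_3$. By Theorem~\ref{tMarina} this $P_3$ sits inside an induced diamond or gem of $G$. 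I will then check, by a short case analysis according to the position of $v$ in the containing structure (endpoint of the $P_3$ in a diamond; endpoint of the underlying $P_4$ of a gem; middle vertex of the $P_3$ coinciding with the universal vertex of a gem; etc.), that $v$ must have at least a second neighbor, contradicting $d(v)=1$.

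For the $2$-connectivity statement, I will again argue by contradiction. Suppose $G$ has a cut vertex $v$, and let $u_1,u_2$ be neighbors of $v$ lying in distinct connected components of $G-v$. Then $u_1,u_2$ are non-adjacent, so $u_1,v,u_2$ induces a $P_3$. By Theorem~\ref{tMarina} it lies in an induced diamond or gem. The idea is to produce, in that structure, a path from $u_1$ to $u_2$ that avoids $v$, contradicting the choice of $v$ as separator. In the diamond case the remaining vertex is adjacent to both $u_1$ and $u_2$, giving a length-two $u_1u_2$ path bypassing $v$. In the gem case there are a few sub-cases depending on whether $v$ is identified with the universal vertex of the gem or with an internal vertex of the $P_4$; in each sub-case either the universal vertex of the gem provides a common neighbor of $u_1$ and $u_2$, or the $P_4$ itself supplies a detour from $u_1$ to $u_2$ not using $v$.

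The routine part is the bookkeeping of cases, and the only mildly delicate point is the gem analysis, where one must enumerate the possible embeddings of the $P_3$ $u_1vu_2$ inside the gem before confirming that in each embedding a $v$-avoiding $u_1u_2$ path exists. Once this enumeration is dispatched, both statements follow immediately from Theorem~\ref{tMarina}.
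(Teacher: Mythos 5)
Your proposal is correct: both claims do follow from Theorem~\ref{tMarina} exactly as you describe, since every vertex of an induced diamond or gem has degree at least $2$ inside that subgraph, and in every embedding of the $P_3$ $u_1vu_2$ into a diamond or gem there is a $u_1$--$u_2$ path avoiding $v$. The paper gives no explicit proof and simply declares the lemma a direct consequence of Theorem~\ref{tMarina}, so your argument is precisely the intended one with the case bookkeeping filled in.
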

\begin{proof}
If there is a vertex with degree $1$, since $G$ has at least $3$ vertices, that vertex would be an extreme of a $P_3$ not contained in an induced diamond nor an induced gem,
that is, a contradiction by Theorem~\ref{tMarina}. Finally, suppose $G$ is not $2$-connected and let $v$ be a vertex such that $G - \{v\}$ is disconnected. Taking two vertices $v_1,v_2$ 
in different connected components of $G - \{v\}$ that are adjacent to $v$, we obtain a $P_3$ not contained in an induced diamond nor an induced gem, and again, a contradiction by Theorem~\ref{tMarina}.
\end{proof}

\section{Distances in $G$ and $KB(G)$}\label{sectDist}

In this section we define the distance between bicliques in a graph. Also, we study the relation between the distance
of bicliques in a graph $G$ and the distance between their respective vertices in $KB(G)$.

\begin{definition}\label{defDist}
Let $G$ be a graph and let $B, B'$ be bicliques of $G$. We define the \textit{distance} between $B$ and $B'$ as 
$d(B,B') = \min \{d(b,b') $ /  $b \in B$, $b' \in B'\}$.
\end{definition}

The next formula states the relationship between the distances of $G$ and $KB(G)$. 
This result is useful to show that the condition of Theorem~\ref{tMarina} is not sufficient. 

\begin{lemma}\label{l4}
Let $G$ be a graph and let $B, B'$ be two different bicliques of $G$. Then 
$d_{KB(G)}(B,B') = \big\lfloor\frac{d_G(B,B') + 1}{2} \big\rfloor + 1$.
\end{lemma}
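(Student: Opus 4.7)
The plan is to prove the identity by sandwiching $d_{KB(G)}(B,B')$ between matching upper and lower bounds. Throughout, set $k := d_G(B,B')$ and $m := d_{KB(G)}(B,B')$.

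For the upper bound $m \leq \lfloor (k+1)/2 \rfloor + 1$, I would use a direct construction. If $k=0$ then $B \cap B' \neq \emptyset$ and $B,B'$ are adjacent in $KB(G)$, so $m = 1$; if $k = 1$, the realizing edge sits inside some biclique $B_1$ (every edge extends to a maximal bipartite complete induced subgraph), giving $m \leq 2$. For $k \geq 2$, fix a geodesic $b = u_0, u_1, \ldots, u_k = b'$ in $G$ with $b \in B$, $b' \in B'$. Any three consecutive vertices of this path induce a $P_3$ (the endpoints cannot be adjacent by minimality), and every induced $P_3$ lies in some biclique, which is the same fact used in the proof of Proposition~\ref{p1}. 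I then decompose the path into overlapping $P_3$'s whose bicliques give consecutive intersections: for even $k = 2j$ take the $j$ triples $\{u_0,u_1,u_2\}, \{u_2,u_3,u_4\}, \ldots, \{u_{k-2},u_{k-1},u_k\}$, each overlapping the next at a single vertex; for odd $k = 2j+1$ use the same starting triples and append the extra triple $\{u_{k-2},u_{k-1},u_k\}$, which overlaps the previous one at two vertices. Choosing a biclique containing each triple produces a walk $B, B_1, \ldots, B_r, B'$ in $KB(G)$ whose length equals $\lfloor (k+1)/2 \rfloor + 1$ in both parities.

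For the lower bound $m \geq \lfloor (k+1)/2 \rfloor + 1$, I would take a shortest path $B = B_0, B_1, \ldots, B_m = B'$ in $KB(G)$ and pick $w_i \in B_i \cap B_{i+1}$ for $0 \leq i \leq m-1$. Then $w_0 \in B$, $w_{m-1} \in B'$, and for each $i = 1, \ldots, m-1$ both $w_{i-1}$ and $w_i$ belong to the biclique $B_i$; since a biclique is a complete bipartite induced subgraph with non-empty sides, any two of its vertices are at $G$-distance at most $2$ (either adjacent across sides, or joined through any vertex of the other side). Concatenating yields a $G$-walk from $w_0$ to $w_{m-1}$ of length at most $2(m-1)$, so $k \leq 2m - 2$, which rearranges to $m \geq \lceil (k+2)/2 \rceil = \lfloor (k+1)/2 \rfloor + 1$.

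The main obstacle I anticipate is the bookkeeping in the upper-bound construction: checking that the overlapping-triple decomposition yields the right number of intermediate bicliques for both parities of $k$, and verifying that consecutive triples actually share a vertex so the bicliques containing them are adjacent in $KB(G)$. The need for an extra overlapping triple at the tail of the path in the odd-$k$ case is precisely the feature responsible for the floor function in the formula, and it is the one place where the two parities must be tracked separately.
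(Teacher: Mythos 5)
Your proof is correct and follows essentially the same route as the paper: the upper bound comes from covering a geodesic between the bicliques by overlapping consecutive triples, each extending to a biclique, and the lower bound from picking a vertex in each consecutive intersection along a shortest path in $KB(G)$ and bounding consecutive picks by $G$-distance $2$ inside a common biclique. The only (welcome) difference is in the lower-bound bookkeeping: you derive $k \le 2m-2$ and invoke integrality to get $m \ge \lceil (k+2)/2\rceil$, which cleanly replaces the paper's case analysis with the correction term $t \in \{-1,0\}$.
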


\begin{proof} Let $v_0,v_k$ be vertices of $G$ such that $v_0 \in B$, $v_k \in B'$ and $d(v_0, v_k) = d_G(B, B') = k$.
If $k=0$ then $B$ and $B'$ intersect in $G$. So they are adjacent as vertices in $KB(G)$. Therefore
$d_{KB(G)}(B, B') = \big\lfloor\frac{0 + 1}{2} \big\rfloor +1= 1$. Suppose now that $k>0$. Let $P_1=v_0 v_1\ldots v_k$ 
be a path in $G$ between $B$ and $B'$ of length $k$. Take $B_i \in V(KB(G))$ such that
$\{v_i, v_{i+1}, v_{i+2}\} \subseteq B_i$ in $G$, $i=0,\ldots,k-2$. These bicliques $B_i$ of $G$ exist since  
$v_i v_{i+2} \notin E(G)$ for $i=0,\ldots,k-2$, otherwise there would be a path of length less than $k$ between $B$ and $B'$.

Then $B B_0 B_2 B_4\ldots B_{2j} \ldots B'$ is a path in $KB(G)$ between $B$ and $B'$ of length
$\big\lfloor\frac{k + 1}{2} \big\rfloor + 1$ and therefore, as $d_G(B,B')=k$, we have that 
$d_{KB(G)}(B,B')$ $\leq \big\lfloor\frac{d_{G}(B,B') + 1}{2} \big\rfloor + 1$.
This situation can be observed in Figure~\ref{Fig20}.

\begin{figure}[ht!]
  \centering
  \includegraphics[scale=.5]{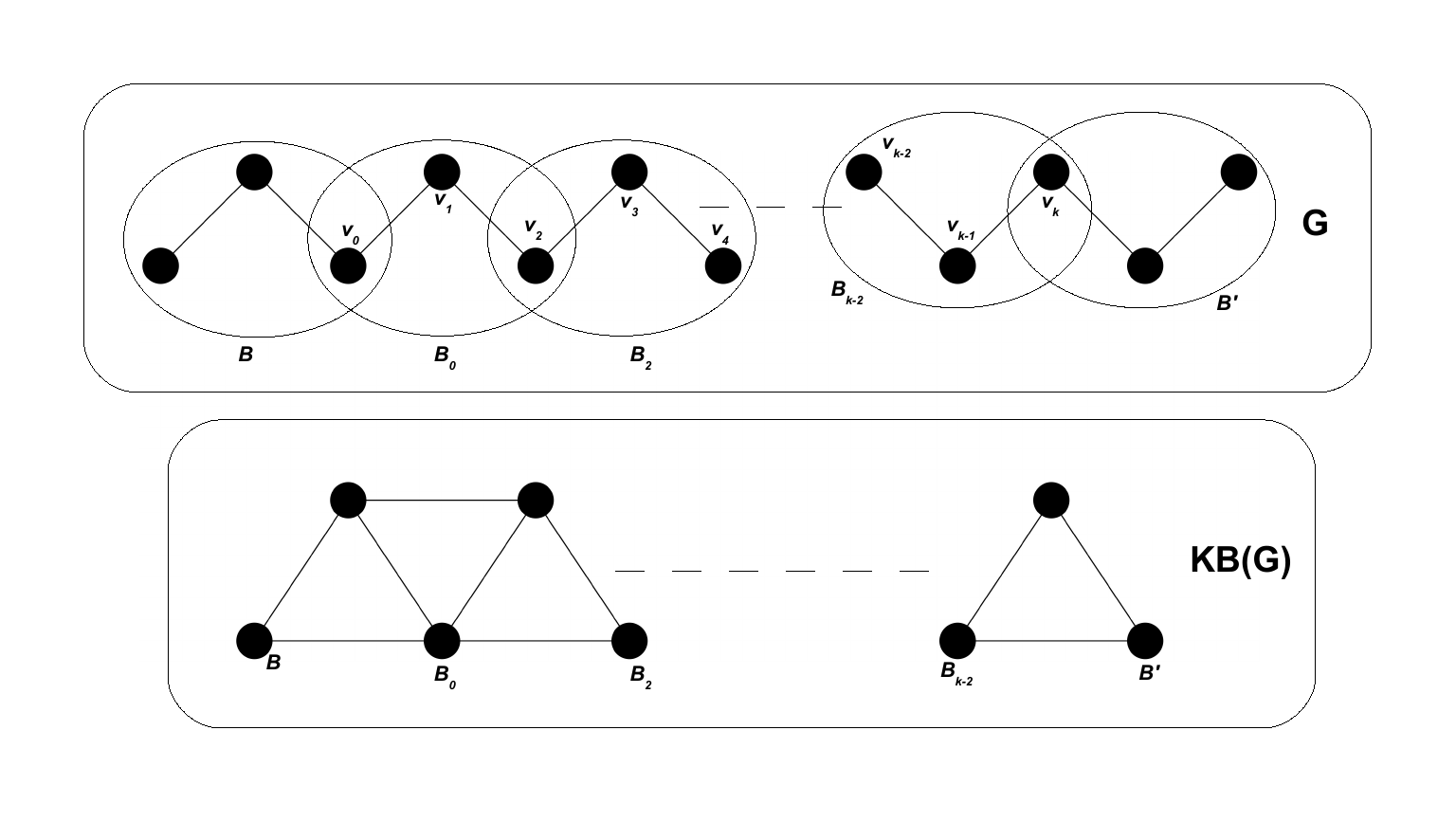}
  \caption{First inequality.}
  \label{Fig20}
\end{figure}

Now let $P_2=B_0 B_1\ldots B_s$ be a path of minimum length in $KB(G)$ between $B=B_0$ and $B'=B_s$ (Fig.~\ref{Fig21}). 
Then $d_{KB(G)}(B, B') = s > 0$. Let $v_{2i} \in B_i \cap B_{i+1}$ be vertices of $V(G)$ for $i=0,\ldots,s-1$.
Thus, we obtain the vertices $v_0, v_2,\ldots,v_{2s-2}$ of $G$. Now, for $i=1,\ldots,s-1$, either $v_{2i-2}$ is adjacent 
to $v_{2i}$ or $v_{2i-2}$ is not adjacent to $v_{2i}$. If they are not adjacent then there exists one vertex $v_{2i-1}$ adjacent
to both since $v_{2i-2}$ and $v_{2i}$ belong to the biclique $B_{i}$ of $G$. Then, the longest path between $v_0$ and 
$v_{2s-2}$ occurs when these two consecutive vertices are not adjacent. In this situation, adding the vertex
adjacent to both between each pair, we have an induced path $v_0v_1\ldots v_{2s-2}$ in $G$ between $B$ and $B'$ of length $2s-2$. 
Then,

\begin{displaymath}
d_G(B,B') \leq 2s-2 
\end{displaymath}

\begin{displaymath}
\frac{d_G(B,B')+2}{2} \leq s = d_{KB(G)}(B,B')
\end{displaymath}

\begin{displaymath}
\frac{d_G(B,B')}{2} + 1 \leq d_{KB(G)}(B,B')
\end{displaymath}

Finally, since $d_G(B,B')$ and $d_{KB(G)}(B,B')$ are integers, it follows that

\begin{displaymath}
\Big\lfloor\frac{d_G(B,B') + 1}{2} \Big\rfloor + 1 \leq d_{KB(G)}(B,B')
\end{displaymath}

Combining both inequalities we obtain the desired result.

\begin{figure}[ht!]
  \centering
  \includegraphics[scale=.57]{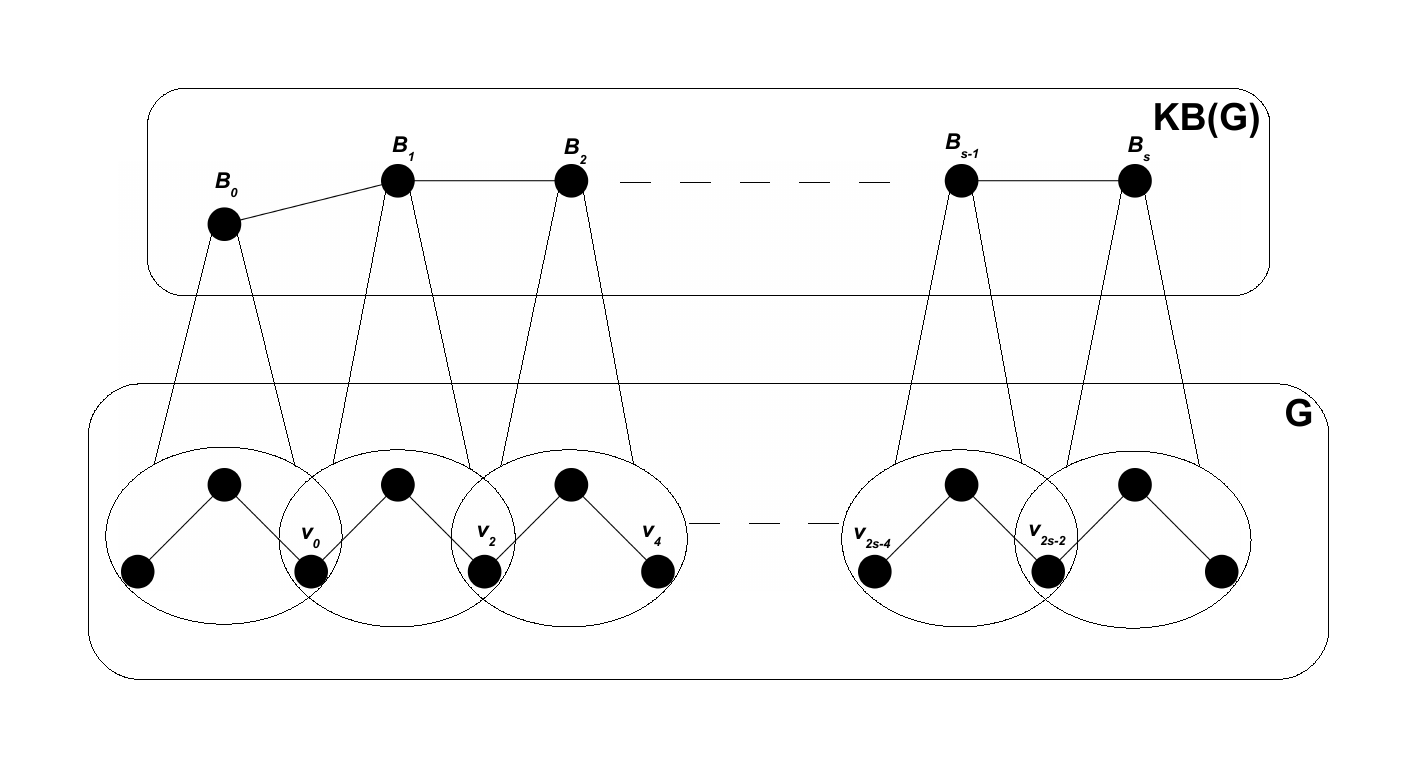}
  \caption{Second inequality.}
  \label{Fig21}
\end{figure}

\end{proof}

Now, based on the distance between two bicliques of a graph $G$, we can ensure the existence of other bicliques ``between them''.
That is, if the distance between the bicliques $B$ and $B'$ of $G$ is $k$, then there exist other bicliques at distance at most
$k-1$ to each of $B$ and $B'$.
This result will be very useful for proving not only Theorem~\ref{tMarina} but also that the condition of the theorem is not sufficient.

\begin{theorem}\label{teodistk}
Let $G$ be a graph and let $B, B'$ be bicliques of $G$ such that $d_G(B, B')=k > 0$. Then there exist at least $k+1$ bicliques in $G$ such that they are at distance at most $k-1$ from both $B$ and $B'$.
\end{theorem}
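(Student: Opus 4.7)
My plan is to exhibit $k+1$ distinct bicliques explicitly, by walking along a shortest $B$-$B'$ path and anchoring two additional bicliques at its endpoints. Fix $v_0\in B$ and $v_k\in B'$ with $d_G(v_0,v_k)=k$, and let $v_0v_1\dots v_k$ be a shortest path. For $i\in\{0,\dots,k-2\}$ the triple $\{v_i,v_{i+1},v_{i+2}\}$ induces a $P_3$ in $G$, because any chord $v_iv_{i+2}$ would shorten the path, and hence sits in some biclique $B_i$. Since $v_i,v_{i+2}\in B_i$, this gives $d_G(B_i,B)\le i\le k-2$ and $d_G(B_i,B')\le k-i-2\le k-2$, both below the required bound. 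Distinctness of the $B_i$ is handled by a short side-check: if $B_i=B_j$ with $i<j$, then $v_{j+2}\in B_i$ would have to sit on one of the two possible sides, and either placement forces a chord between path-vertices at path-distance greater than $1$, contradicting the shortest-path property.

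Next I would construct two further bicliques $B_{-1}$ and $B_{k-1}$ anchored at the endpoints. Since $B$ is a biclique, $v_0$ has a neighbor $u\in B\setminus\{v_0\}$. If some such $u$ satisfies $uv_1\notin E(G)$ (Case~I), then $\{u,v_0,v_1\}$ is an induced $P_3$ and I take $B_{-1}$ to be a biclique containing it. Otherwise every neighbor of $v_0$ in $B$ is adjacent to $v_1$ (Case~II); when $k\ge 2$, the estimate $d_G(u,v_2)\ge 2$ gives $uv_2\notin E(G)$, so $\{u,v_1,v_2\}$ is an induced $P_3$ that I take $B_{-1}$ to contain, while when $k=1$ I take $B_{-1}$ to be a biclique containing the edge $uv_1$. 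In all cases $B_{-1}$ contains $u\in B$ and $v_1$, yielding $d_G(B_{-1},B)=0$ and $d_G(B_{-1},B')\le k-1$; the biclique $B_{k-1}$ is built symmetrically from the $v_k$ end. The fundamental estimate $d_G(u,v_j)\ge j$ for every $u\in B$ rules out $u\in B_i$ by a short side-by-side analysis of the two possible sides of $B_i$, yielding $B_{-1}\ne B_i$ for every $i\in\{0,\dots,k-2\}$, and symmetrically for $B_{k-1}$.

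The main obstacle is separating $B_{-1}$ from $B_{k-1}$. For $k\ge 3$, any coincidence would place vertices $u\in B$ and $u'\in B'$ in a common biclique, hence at graph-distance at most $2$ in $G$, contradicting $d_G(u,u')\ge k\ge 3$. For $k=2$, the $P_3$ constructions force a hypothetical common biclique to require a cross-adjacency between $B$ and $B'$ (such as $uv_2\in E(G)$ or $v_0u'\in E(G)$) that is forbidden by $d_G(B,B')=2$. The truly delicate case is $k=1$, where both endpoint bicliques may a priori collapse into a single $K_{2,2}$ on $\{v_0,v_1,u,u'\}$ when $uu'\in E(G)$. In this subcase I plan to exploit the maximality of $B$ and $B'$ to locate an additional vertex on the side of $B$ containing $v_0$ (or on the side of $B'$ containing $v_1$), enlarging one of the endpoint bicliques into a strictly larger biclique that cannot coincide with the symmetric enlargement because the two maximal extensions would pick up disjoint vertices of $B$ and $B'$. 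Handling this last subcase carefully is where I anticipate the bulk of the technical work.
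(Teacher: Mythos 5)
Your construction for $k\ge 2$ is essentially the paper's: the $k-1$ ``middle'' bicliques from consecutive path triples plus two endpoint bicliques obtained from a neighbour $x\in B$ of $v_0$ (splitting on whether $xv_1\in E(G)$) and symmetrically from $B'$. Your distance estimates and distinctness checks for that regime are fine and in places more careful than the paper's own.

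The genuine gap is the case $k=1$, which you explicitly leave unfinished. For $k=1$ there are no middle triples, so both required bicliques must come from the two endpoints, and as you note they can collapse onto the single biclique containing the induced $C_4$ on $\{v_0,u,v_1,u'\}$ (sides $\{u,v_1\}$ and $\{v_0,u'\}$). This is not a routine loose end: it is exactly where the paper spends most of its effort, running a five-way case analysis on the adjacencies among $x,v\in B$ and $w,y\in B'$. Moreover, your sketched repair --- ``enlarging one of the endpoint bicliques into a strictly larger biclique'' so that the two maximal extensions differ --- is aimed in the wrong direction: an induced $C_4$ need not admit two distinct maximal complete bipartite extensions, so you cannot in general manufacture a second biclique by choosing a different enlargement of the same $C_4$. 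What the paper does instead is use the maximality of $B$ to produce a \emph{new induced $P_3$}: since $w=v_1\notin B$ but $w$ is adjacent to $v=v_0\in B$, there is a witness $z\in B$ with $zv\in E(G)$ and (after discharging the subcases that fall back into earlier configurations) $zw,zy\notin E(G)$; then $\{z,v,w\}$ is an induced $P_3$ whose containing biclique meets both $B$ and $B'$ and cannot equal the $C_4$-biclique because $z$ cannot be inserted into either of its sides. A symmetric witness in $B'$ is also available. Without carrying out this step (or the equivalent case analysis), the theorem is unproved precisely in its base case $k=1$, which is also the case invoked later in the paper (e.g.\ in Corollary~\ref{coroLoco} and Proposition~\ref{pULTI}).
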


\begin{proof} We will prove the theorem in two parts, first when $d_G(B, B')=1$ and last when $d_G(B, B')=k > 1$.

Suppose first that $d_G(B, B')=1$. Let $v \in B,$ $w \in B'$ be adjacent vertices and let $x \in B,$ $y \in B'$ be vertices such that $vx,wy \in E(G)$. 
In Figure~\ref{Fig22} we show all possible configurations (up to symmetry) according to the adjacencies between vertices. We count the number of different bicliques in each case.\\

\begin{figure}[ht!]
  \centering
  \includegraphics[trim = 0mm 10mm 0mm 20mm, scale=.25]{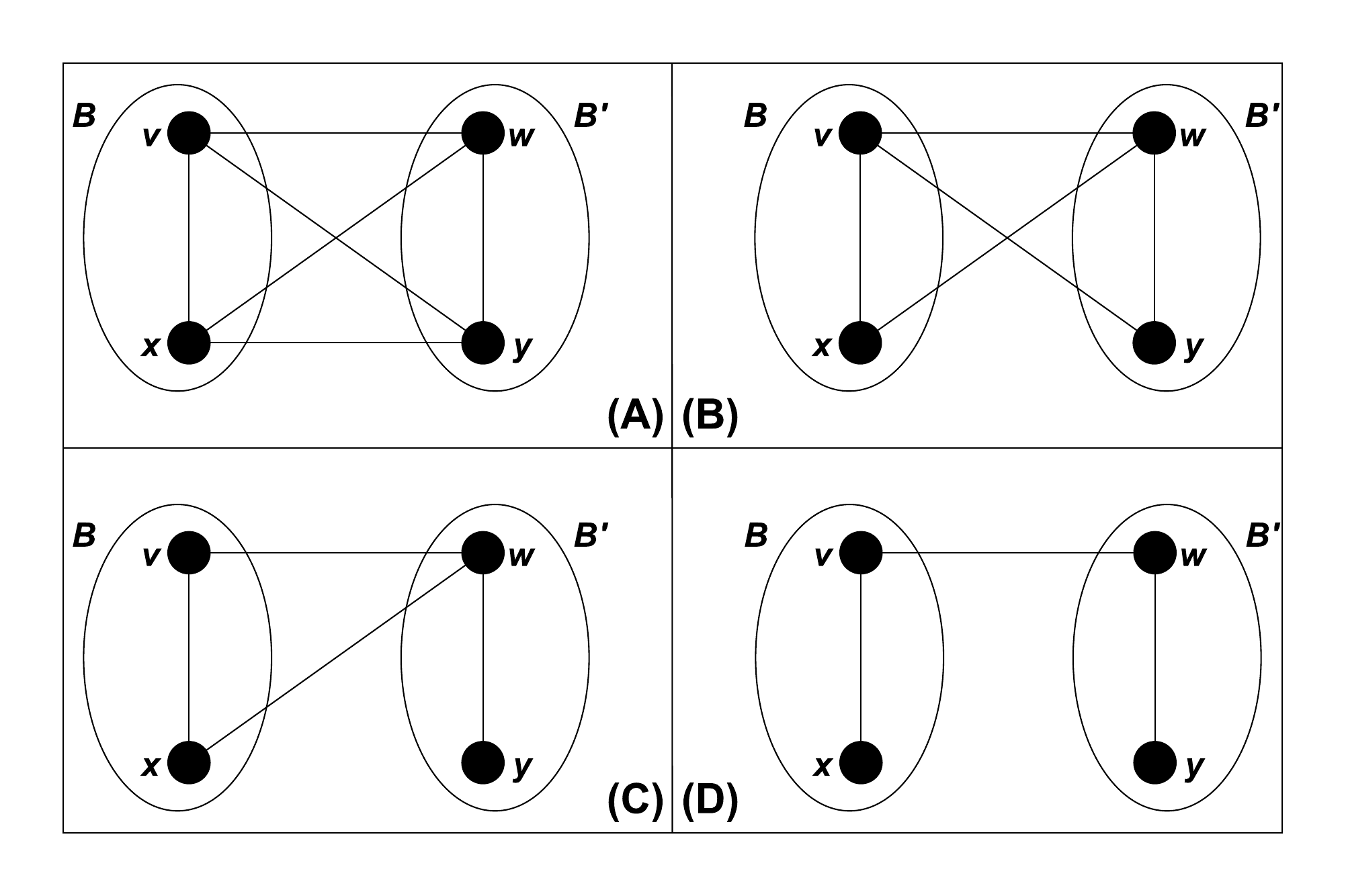} 
  \caption{Different cases for bicliques $B$ and $B'$ at distance $1$.}
  \label{Fig22}
\end{figure}

\textbf{(A)} $\{x,y\}, \{v,y\}, \{v,w\}, \{x,w\}$ are contained in four different bicliques.\\

\textbf{(B)} $\{x,v,y\}, \{x,w,y\}$ and $\{v,w\}$ are contained in three different bicliques.\\

\textbf{(C)} $\{x,w,y\}$ and $\{v,w,y\}$ are contained in two different bicliques.\\

\textbf{(D)} $\{x,v,w\}$ and $\{v,w,y\}$ are contained in two different bicliques.\\

Note that if $xy \in E(G)$ and $xw, vy \notin E(G)$, then $\{x,v,w,y\}$ is contained in a biclique.
Now, as the biclique containing $\{v,w,x,y\}$ should be different from $B$, there exists a vertex $z \in B$ such that either $zv,zw \in E(G)$ (or $zx, zy \in E(G)$) or $zv \in E(G)$, $zy \notin E(G)$ (or $zx \in E(G)$, $zw \notin E(G)$). 
By symmetry, as the biclique containing $\{v,w,x,y\}$ is different from $B'$, there should also exist a vertex $u \in B'$ with similar adjacencies.
In these situations, we can see that we fall in one of the previous cases \textbf{(B)}, \textbf{(C)} or \textbf{(D)}, therefore taking
into account the biclique containing $\{v,w,x,y\}$, we obtain at least three different bicliques.

In all cases there are at least two different bicliques that intersect $B$ and $B'$, that is, they are at distance $k-1=0$ to each of them. We remark that we exactly count in each case the minimum number of bicliques that can exist between $B$ and $B'$ as we will use that later in the paper.

 \begin{figure}[ht!]
  \centering
  \includegraphics[trim = 10mm 10mm 0mm 10mm, scale=.47]{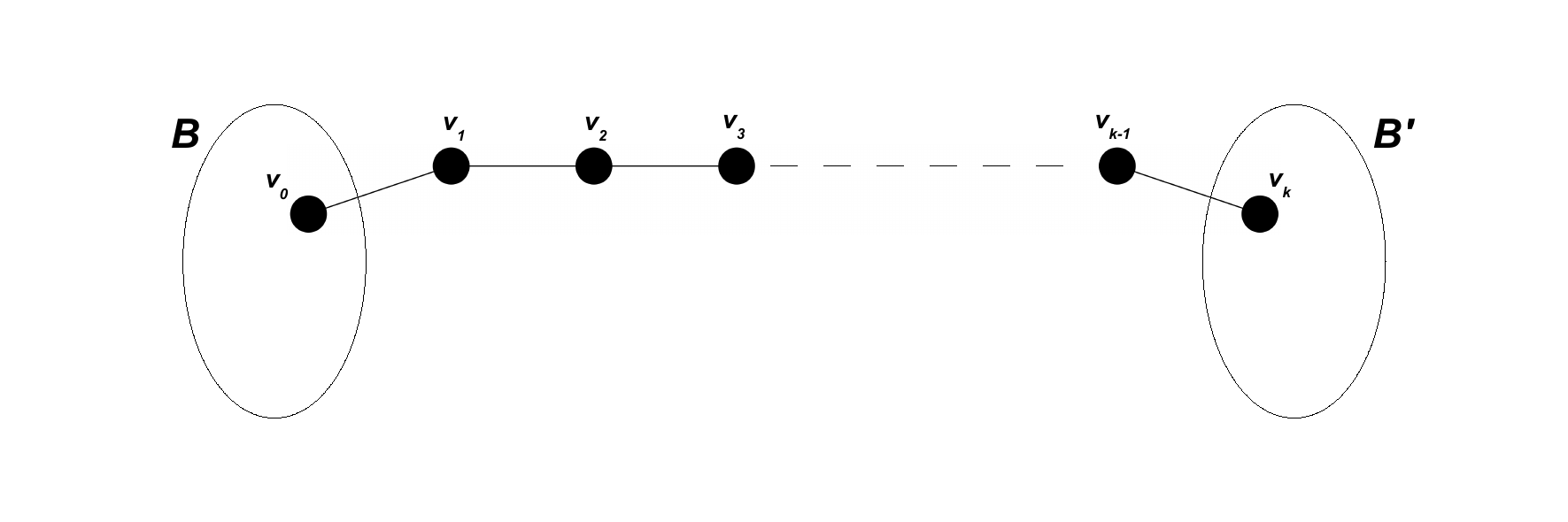}
  \caption{Bicliques $B$ and $B'$ are at distance $k$.}
  \label{Fig23}
\end{figure}

Suppose last that $d_G(B,B')=k > 1$ and let $v_0v_1\ldots v_k$ be a shortest path between $B$ and $B'$ such that $v_0 \in B,$ $v_k \in B'$.
Then, each triple $\{v_{i},v_{i+1},v_{i+2}\}$ is contained in a different biclique of $G$ for $i=0,\ldots,k-2$.
Therefore we obtain $k-1$ bicliques that are at distance at most $k-1$ to each of $B$ and $B'$. 
We obtain the two remaining bicliques as follows. Let $x\in B,$ $y \in B'$ such that $xv_0,yv_k \in E(G)$.
If $xv_1 \notin E(G)$ (respectively $yv_{k-1} \notin E(G)$) there is a biclique containing $\{x,v_0,v_1\}$ (respectively $\{y,v_k,v_{k-1}\}$).
We call this biclique sharing an edge with $B$ (respectively $B'$) \textit{special biclique}. Otherwise, if $xv_1 \in E(G)$ (respectively $yv_{k-1} \in E(G)$),
then $\{x,v_1,v_2\}$ (respectively $\{y,v_{k-1},v_{k-2}\}$) is contained in a biclique.
\end{proof}

As an immediate result of Theorem~\ref{teodistk} for $k=1$ and $k=2$ we obtain the following two corollaries.

\begin{corollary}\label{coroLoco1}
Let $G$ be a graph, let $B, B'$ be bicliques of $G$ such that $d_G(B,B')=1$, let $e$ be an edge
with one endpoint in $B$ and the other in $B'$, and let $B_2$ be a biclique containing $e$. Then, there exists a biclique $B_1$
such that it intersects $B,B'$ and $B_2$. Moreover, the vertices $b,b',b_1,b_2 \in KB(G)$ corresponding to the bicliques $B,B',B_1,B_2$ in $G$ respectively, induce a diamond in $KB(G)$.
\end{corollary}

\begin{corollary}\label{coroLoco2}
Let $G$ be a graph, let $B, B'$ be bicliques of $G$ such that $d_G(B,B')=2$. Consider an induced $P_{3}$ with its extremes in $B$ and $B'$, and let $B_2$ be a biclique containing that $P_3$.
Then, there exist two different bicliques $B_1,B_3$ ``between'' $B$ and $B'$, such that calling
$b,b',b_1,b_2,b_3$ the vertices in $KB(G)$ corresponding to the bicliques $B,B',B_1,B_2,B_3$ in $G$ respectively, we either have  
\begin{itemize}
\item that both $B_1$ and $B_3$ are special bicliques, thus $\{b,b',b_1,b_2,b_3\}$ induces a gem in $KB(G)$, or
\item that at least one of $B_1$ and $B_3$ is not special (suppose $B_1$ is not), thus $\{b,b',b_1,b_2\}$ induces a diamond in $KB(G)$.
\end{itemize}
\end{corollary}

\vspace*{3mm}
The proof of Theorem~\ref{tMarina} follows from Theorem~\ref{teodistk}, and Corollaries~\ref{coroLoco1} and~\ref{coroLoco2}.
\vspace*{3mm}

\noindent\prooff{~\ref{tMarina}} Let $bb_2b'$ be an induced $P_{3}$ in $G$ and let $B$, $B_2$ and $B'$ be the bicliques of $H$ associated to the vertices 
$b$, $b_2$ and $b'$ of $G$.
The biclique $B_2$ contains either an edge with one endpoint in $B$ and the other in $B'$, or $B_2$ contains a $P_3$ with its extremes in $B$ and $B'$. 
In the first case, by Corollary~\ref{coroLoco1}, we obtain that $bb_2b'$ is contained in an induced diamond of $G$.
In the second case, by Corollary~\ref{coroLoco2}, we obtain that $bb_2b'$ is contained in either an induced diamond or an induced gem of $G$.
\qed

\vspace*{5mm}

Now we will show that although in the \textit{crown} (Fig.~\ref{FigExtra3}) every induced $P_{3}$ is contained in an induced diamond, it is not a biclique graph. This result is a counterexample of the question of the sufficiency of the property of Theorem~\ref{tMarina}.
\begin{figure}[ht!]	
  \centering	
  \includegraphics[scale=.25]{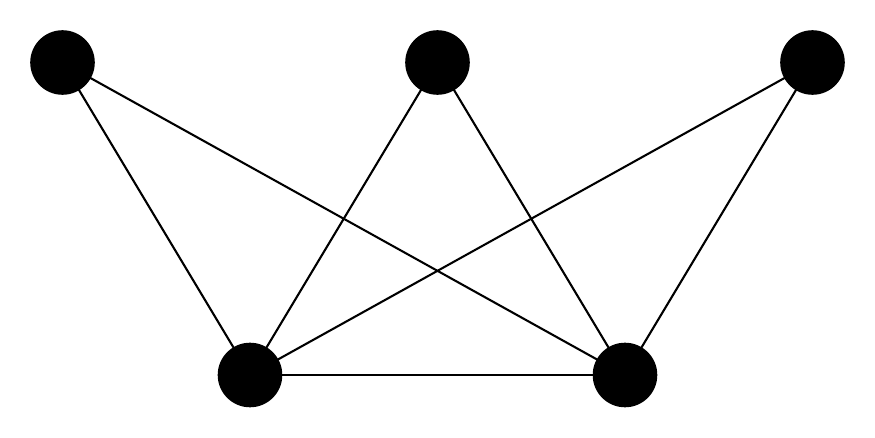} 	
  \caption{The \textit{crown} is not a biclique graph but has every $P_{3}$ in a diamond.}
  \label{FigExtra3}	
\end{figure}

Indeed, we will prove a more general result that implies not only that the \textit{crown} graph is not a biclique graph but neither are many other graphs. We remark that the \textit{crown} is the smallest graph that verifies the condition of Theorem~\ref{tMarina} but is not a biclique graph.

\begin{proposition}\label{pULTI}
Let $G=KB(H)$ for some graph $H$, where $G$ is not isomorphic to the diamond. Then, there do not exist $v_1,v_2 \in V(G)$ such that 
$N(v_1)=N(v_2)$ and their neighbors induce a $K_2$.
\end{proposition}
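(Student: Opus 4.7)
The plan is to assume for contradiction that $G = KB(H)$, $G \neq \textit{diamond}$, and $v_1, v_2 \in V(G)$ satisfy $N(v_1) = N(v_2) = \{u_1, u_2\}$ with $u_1 u_2 \in E(G)$. Let $V_1, V_2, U_1, U_2$ be the bicliques of $H$ corresponding to $v_1, v_2, u_1, u_2$. The hypotheses translate into $V_1 \cap V_2 = \emptyset$, every $V_i$ meets every $U_j$, $U_1 \cap U_2 \neq \emptyset$, and crucially the only bicliques of $H$ meeting $V_i$ (besides $V_i$ itself) are $U_1$ and $U_2$. Since $d_G(v_1, v_2) = 2$, Lemma~\ref{l4} forces $d_H(V_1, V_2) \in \{1, 2\}$, splitting the proof into two cases.

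If $d_H(V_1, V_2) = 2$, I would apply Theorem~\ref{teodistk} with $k = 2$. The three distinct bicliques $W, W_x, W_y$ that its proof yields each meet $V_1$ or $V_2$ by construction ($W$ meets both endpoints of the shortest path, $W_x$ is built from $x \in V_1$, $W_y$ from $y \in V_2$), so each corresponds in $KB(H)$ to a vertex adjacent to $v_1$ or $v_2$, i.e.\ to a vertex of $\{u_1, u_2\}$; but three distinct bicliques cannot lie in a two-element set, contradiction. If instead $d_H(V_1, V_2) = 1$, choose an edge $vw$ with $v \in V_1, w \in V_2$ and vertices $x \in V_1, y \in V_2$ adjacent to $v, w$ respectively, and follow the sub-case analysis in the proof of Theorem~\ref{teodistk} for $k = 1$. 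Whenever at least two of $\{xw, xy, vy\}$ are edges of $H$, or $xy$ is the only such edge, that proof produces three distinct bicliques each meeting both $V_1$ and $V_2$, yielding the same contradiction.

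It remains to rule out the two small sub-cases in which the proof gives only $U_1, U_2$ (no edges among $\{xw, xy, vy\}$, or only the edge $xw$). Here I would exploit $G \neq \textit{diamond}$: applying Theorem~\ref{tMarina} to each induced $P_3$ of the form $v_1 u_i z$ with $z \notin \{v_1, v_2, u_1, u_2\}$ and $z \sim u_i$ shows $z \sim u_1 \Longleftrightarrow z \sim u_2$, and connectivity of $G$ then supplies a vertex $t$ adjacent to both $u_1, u_2$ but to neither of $v_1, v_2$. Its biclique $T$ of $H$ satisfies $T \cap V_i = \emptyset$ and $T \cap U_j \neq \emptyset$, so one can pick $r \in T \cap U_1$ outside $V_1 \cup V_2$. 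Examining which side of $U_1$'s bipartition contains $r$ and invoking the maximality of $V_1$ (which in case (a)/(b) forces $V_1$ to contain a third vertex $x'$ with specific adjacencies to $x, v, w$), one produces either a new edge incident to $V_1$ (for instance $rx'$) that cannot lie in $V_1, U_1$ or $U_2$, or a new maximal bipartite complete subgraph containing $v$ together with vertices from $T \setminus (V_1 \cup V_2)$; in either case the resulting biclique meets $V_1$ yet differs from $V_1, U_1, U_2$, a contradiction. The hard step is precisely this last construction: it requires enumerating the two possible sides of $U_1$ for $r$ and the two possible types of the extra vertex $x' \in V_1$ (depending on whether $x' \sim x$ or $x' \sim v$), and verifying in every sub-configuration that maximality of $V_1$ forces a forbidden edge or biclique.
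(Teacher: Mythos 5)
Your setup and the first two thirds of the argument are sound and essentially parallel the paper's proof: the translation of the hypotheses into statements about the bicliques $V_1,V_2,U_1,U_2$ of $H$, the case split $d_H(V_1,V_2)\in\{1,2\}$ via Lemma~\ref{l4}, and the counting argument in the case $d_H(V_1,V_2)=2$ (three pairwise distinct bicliques from the proof of Theorem~\ref{teodistk}, each meeting $V_1$ or $V_2$ and distinct from both, hence all lying in the two-element set $\{U_1,U_2\}$) are all correct; that last count is in fact cleaner than the paper's figure-based analysis of the same case. The sub-cases of $d_H(V_1,V_2)=1$ that produce three or more bicliques meeting $V_1$ are likewise correctly dispatched.

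The gap is in the two surviving sub-cases, and it is exactly where the real difficulty of the proposition sits. Two problems. First, your claim that Theorem~\ref{tMarina} applied to the induced $P_3$ $v_1u_1z$ yields $z\sim u_1\Longleftrightarrow z\sim u_2$ is false as stated: the diamond completion is indeed impossible (the fourth vertex would have to be $u_2$), but the gem completion is not, and it places $v_1$ and $z$ as endpoints of the $P_4$ with $u_1$ universal and $u_2$ as the neighbor of $v_1$ on the path --- which is consistent with $z\not\sim u_2$. (The conclusion you actually need, a vertex $t$ adjacent to both $u_1,u_2$ and to neither $v_1,v_2$, can still be extracted, since the remaining gem vertex has exactly those adjacencies; but the argument has to be rerouted through the gem.) Second, and more seriously, the final construction is asserted rather than proved. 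Taking $r\in T\cap U_1$ with $r\notin V_1\cup V_2$, the edge $rx'$ with $x'\in V_1$ does \emph{not} in general lie outside $V_1\cup U_1\cup U_2$: if $r$ sits on the opposite side of $U_1$'s bipartition from $x'$, then $rx'$ is simply an edge of $U_1$ and no new biclique meeting $V_1$ is forced. So the dichotomy ``either a forbidden edge incident to $V_1$ or a new biclique containing $v$'' does not fall out of the stated data; one needs to pin down the full local structure of $H$ in the two configurations (as the paper does with its cases (a) and (b), using maximality of $V_1$, $V_2$, $U_1$, $U_2$ to force extra vertices $z,z_1,z_2$) and then verify that \emph{every} biclique other than the four named ones meets $V_1$ or $V_2$. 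You have identified the right target but deferred precisely the verification that constitutes the proof of these cases, so as written the argument is incomplete.
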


\begin{proof} Suppose that there exist $v_1,v_2 \in V(G)$ such that $N(v_1)=N(v_2)$ and their neighbors induce a $K_2$. Then, 
$d_{G}(v_1,v_2)=2$ and therefore, if $B$ is the biclique of $H$ that corresponds to $v_1$ and $B'$ is the biclique of $H$ 
that corresponds to $v_2$, by Lemma~\ref{l4}, $d_{H}(B,B')=2$ or $d_{H}(B,B')=1$. We will analyze each case.

\begin{itemize}
\item Case $d_{H}(B,B')=2$. Now, $H$ must contain a subgraph as depicted in Figure~\ref{FigUltiPosta}. We will show that we arrive at a contradiction.
Suppose first that one of both dotted edges does not exist, say $vy$. Then $\{v,x,y\}$ is contained in a biclique that does not intersect
$B'$. This is a contradiction since $N(v_1)=N(v_2)$ in $G$.
Suppose next that both dotted edges $vy$ and $vy'$ exist. In this case we also arrive at a contradiction since in $H$ there are at least four bicliques 
that intersect with $B$ and $B'$. We obtain one for each choice of one element in $\{x,y\}$ in $B$, one element in $\{x',y'\}$ in $B'$ and $v$. 
Therefore $N(v_1)=N(v_2)$ does not induce a $K_2$, which is a contradiction.

\begin{figure}[ht!]
  \centering	
  \includegraphics[scale=.3]{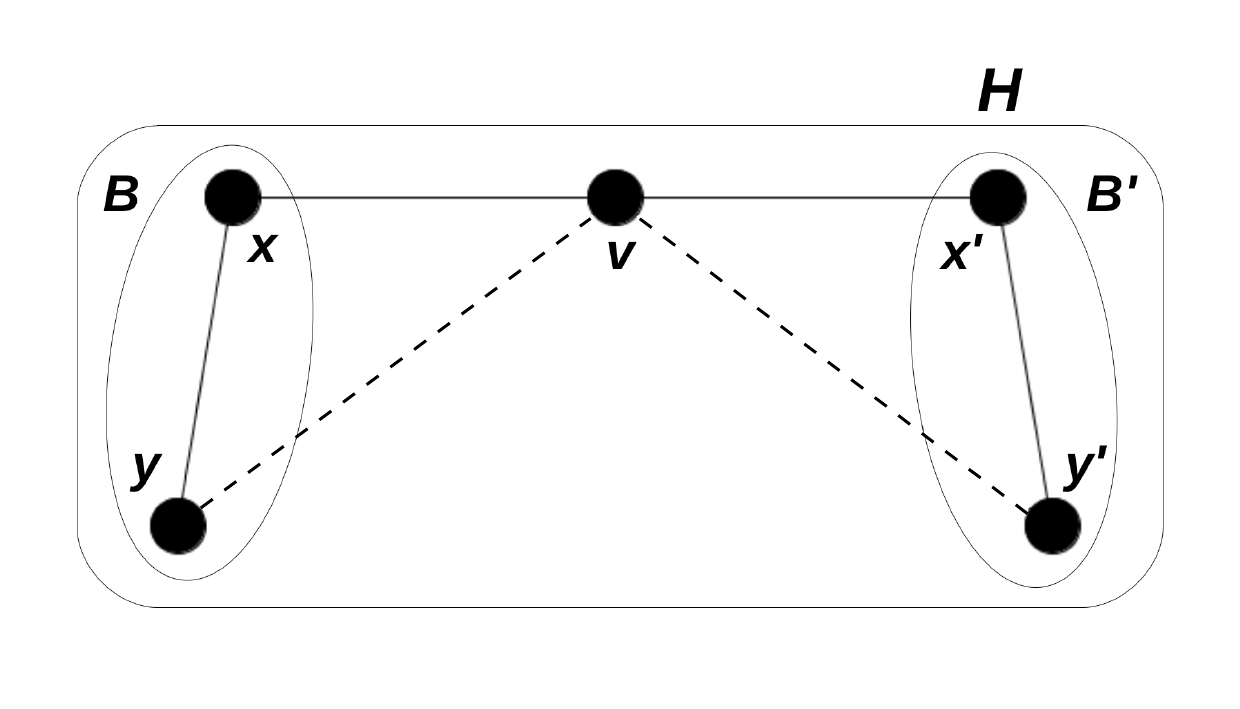} 	
  \caption{Graph $H$ when $d_{H}(B,B')=2$.}
  \label{FigUltiPosta}	
\end{figure}

  \item Case $d_{H}(B,B')=1$.
In this case, by Theorem~\ref{teodistk}, there exist at least two bicliques $B_{1},B_{2}$ in $H$ that intersect both $B$ and 
$B'$. Clearly they must be exactly two, otherwise $N(v_1)=N(v_2)$ would not induce a $K_{2}$. Now, by the first part of the proof of Theorem~\ref{teodistk}, only in the cases \textbf{(C)} and 
\textbf{(D)} it is possible that there are exactly two bicliques intersecting both $B$ and $B'$ in $H$. Figure~\ref{FigDOSCASOS} shows both possible options. 
The labels of Theorem~\ref{teodistk} were as given in Figure~\ref{Fig22}. 

\begin{figure}[ht!]	
  \centering	
  \includegraphics[scale=.4]{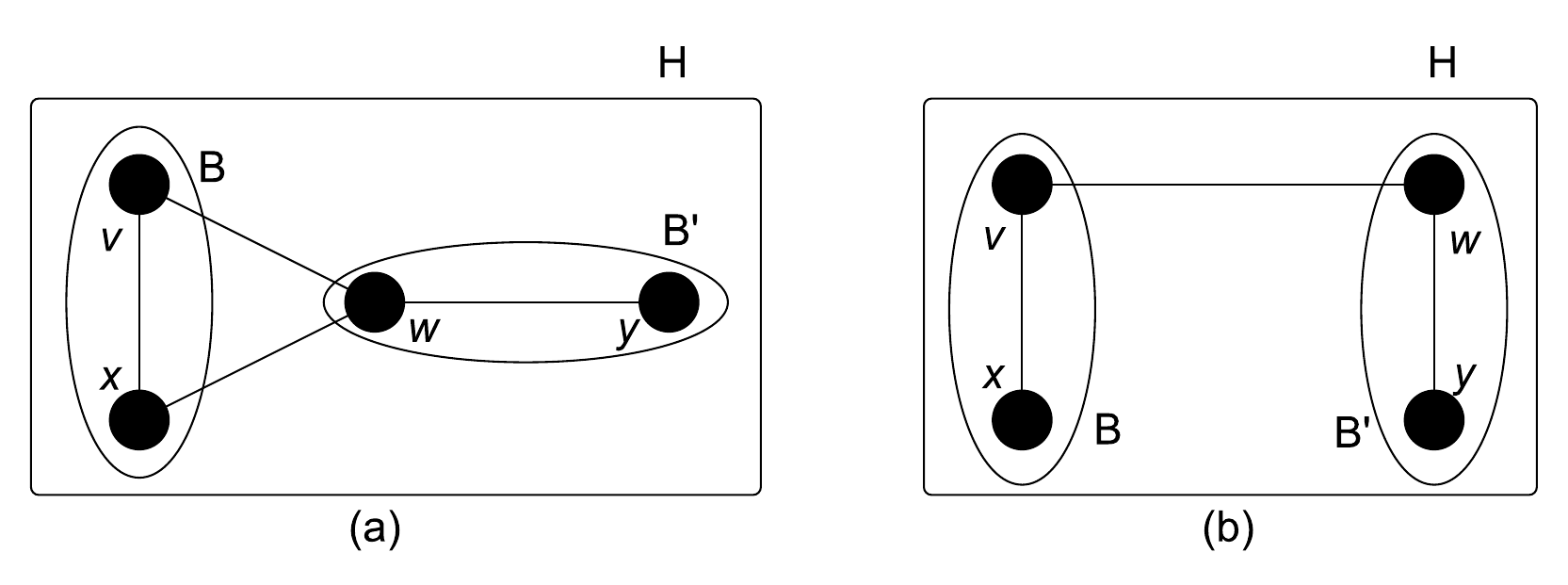} 	
  \caption{Only options for $H$ with two bicliques that intersect $B$ and $B'$.}
  \label{FigDOSCASOS}	
\end{figure}

    $-$ Case \textbf{(a)} (follows case \textbf{(C)} of the proof of Theorem~\ref{teodistk}):    
     
    As we can observe in Figure~\ref{FigDOSCASOS}\textbf{a}, if $B_1$ is the biclique containing $\{v, w, y\}$ and $B_2$
is the biclique containing $\{x, w, y\}$, $H$ has four bicliques such that they induce a 
diamond in $G$. As $G$ is not isomorphic to the diamond, there must exist another biclique in $H$ that intersects neither $B$ nor $B'$. Let $B_3$ be another biclique such that $V(B_3) \subseteq V(H) \setminus (B \cup B')$.
Suppose now that there is a vertex $u \in B_3$ such that $u \notin (B_1 \cup B_2)$. Since $H$ is connected, we can
choose $B_3$ such that there is a vertex $u' \in (B \cup B' \cup B_1 \cup B_2)$ adjacent to $u$. Clearly, if $u' \in (B \cup B')$,
we obtain a contradiction as the edge $uu'$ is contained in a biclique different to $B_1$, $B_2$ that intersects $B$ or $B'$.
Otherwise, since the edge $wy \in (B_1 \cap B_2)$ and $u' \in (B_1 \cup B_2)$, we have that $u'$ is adjacent to $w$ or $y$, therefore
we obtain that $\{u,u',w\}$ or $\{u,u',y\}$ is contained in another biclique intersecting $B'$ respectively which is a contradiction.
We can conclude that $V(H) \setminus (B \cup B')$ is contained in $B_1 \cup B_2$.

We show now that there are at least three bicliques that intersect $B$ or $B'$, i.e., $v_1$ or $v_2$ has an open neighborhood bigger than two vertices what would be a contradiction. For this let $ab$ be an edge in the biclique $B_3$. 
As $V(B_3)$ is contained in $B_1 \cup B_2$, both $a$ and $b$ are adjacent to at least one vertex in $\{w, y\}$.

If $a$ is adjacent to both $w$ and $y$, then the edges $ab$ in $B_3$ and $wy$ in $B'$
determine case \textbf{(A)} or \textbf{(B)} of the proof of Theorem~\ref{teodistk}, which is a contradiction. Therefore, $a$ is adjacent
to just one vertex in $\{w,y\}$. Similarly, the same can be said about $b$.

If $a$ and $b$ are adjacent to different vertices in $\{w,y\}$, then these four
vertices induce a $C_4$ and hence there are three bicliques between $B_3$ and $B'$
like in the proof of Theorem~\ref{teodistk}, a contradiction.

If $a$ and $b$ are both adjacent to $y$, consider the bicliques containing
$\{w, y, a\}$ and $\{w, y, b\}$. They cannot be the same biclique, so one is $B_1$ and
the other is $B_2$. Suppose without loss of generality that the first biclique
is $B_1$ and the second biclique is $B_2$. Recalling that $B_1$ is the biclique containing 
$\{v, w, y\}$ and $B_2$ is the biclique containing $\{x, w, y\}$, one can conclude
that $a$ is adjacent to $v$ and that $b$ is adjacent to $x$. Then, if we consider the
edge $vx$ in $B$ and the edge $ab$ in $B_3$, we have that $\{a, b, v, x\}$ either induces
a $C_4$, a diamond (case \textbf{(B)}) or a $K_4$ (case \textbf{(A)}). Thus there are at least
three bicliques between $B$ and $B_3$, a contradiction.

Finally, if both $a$ and $b$ are adjacent to $w$, this time suppose without loss
of generality that $B_1$ is the biclique containing $\{a, w\}$ and that $B_2$ is the
biclique containing $\{b, w\}$. If $v$ is not adjacent to $b$, then any biclique containing 
$\{v, w, b\}$ is different from $B_1$ and $B_2$, so there are three bicliques that
intersect $B'$, a contradiction. Consequently, $v$ must be adjacent to $b$. Similarly,
$x$ is adjacent to $a$. Therefore, when we consider the edges $vx$ in $B$ and $ab$ in $B_3$, we see that $\{a, b, v, x\}$ induces a $C_4$, diamond or $K_4$, so there are at least three bicliques between $B$ and $B_3$, a contradiction.

$-$ Case \textbf{(b)} (follows case \textbf{(D)} of the proof of Theorem~\ref{teodistk}): 

As we can observe in Figure~\ref{FigDOSCASOS}\textbf{b}, if $B_1$ is the biclique containing $\{x,v,w\}$ and $B_2$
is the biclique containing $\{v,w,y\}$, in a similar way of case \textbf{(a)}, $H$ has four bicliques that induce a diamond in $G$. Then, there must exist other biclique $B_3$ in $H$ different from these four such that it does not intersect $B$ or $B'$.

One can show that $V(H) \setminus (B \cup B')$ is contained in $B_1 \cup B_2$ with exactly the same argument as previous case
using now that the edge $vw \in (B_1 \cap B_2)$ rather that the edge $wy \in (B_1 \cap B_2)$ as we did in case \textbf{(a)}.

Now we show for this case that there are at least three bicliques that intersect $B$ or $B'$. 
Let $B_3$ be a fifth biclique in $H$ not intersecting $B$ or $B'$ and let $ab$ be an edge in $B_3$. As $V(B_3)$ is contained in $B_1 \cup B_2$, both $a$ and $b$ are adjacent to at least one vertex in $\{v,w\}$.

If $a$ is adjacent to both $v$ and $w$, then the bicliques containing edges $av$ and $aw$ are different from $B_1$ and $B_2$, which is a contradiction. Therefore, $a$ is adjacent to just one vertex in $\{v,w\}$ and the same can be said about $b$.

If $a$ and $b$ are adjacent to different vertices in $\{v,w\}$, suppose without loss of generality that $a$ is adjacent to $v$ and
$b$ adjacent to $w$, then these four vertices induce a $C_4$ that is contained in a biclique $\tilde{B}$. If $\tilde{B}$ is different from $B_1$ and $B_2$, we obtain a contradiction. Otherwise $\tilde{B}$ should be $B_1$ or $B_2$. Without loss of generality, suppose that
$\tilde{B} = B_1$, therefore $x$ must be adjacent to $b$. Considering the edge $ab$ in $B_3$ and $xv$ in $B$, $\{a,b,x,v\}$ induces
a $C_4$ and hence there are three bicliques between $B_3$ and $B$ as in the proof of Theorem~\ref{teodistk}, a contradiction.

Finally, if $a$ and $b$ are both adjacent to $v$ (symmetric if both are adjacent to $w$), we have that either $\{x,v,a\}$ and $\{x,v,b\}$
are contained in bicliques different from $B_2$ which is a contradiction, or $\{a,b,x,v\}$ induces a diamond or a $K_4$ and
therefore there are at least three bicliques between $B$ and $B_3$ which is also a contradiction.

\end{itemize}

As no more cases are left, there do not exist $v_{1},v_{2} \in V(G)$ such that $N{(v_{1})}=N{(v_{2})}$ with their neighbors inducing a $K_2$ which completes the proof.

\end{proof}

Figure~\ref{FigExtraExtra3} shows some examples of graphs where every $P_{3}$ is included in a diamond that are not biclique graphs.

\begin{figure}[ht!]
  \centering	
  \includegraphics[scale=.5]{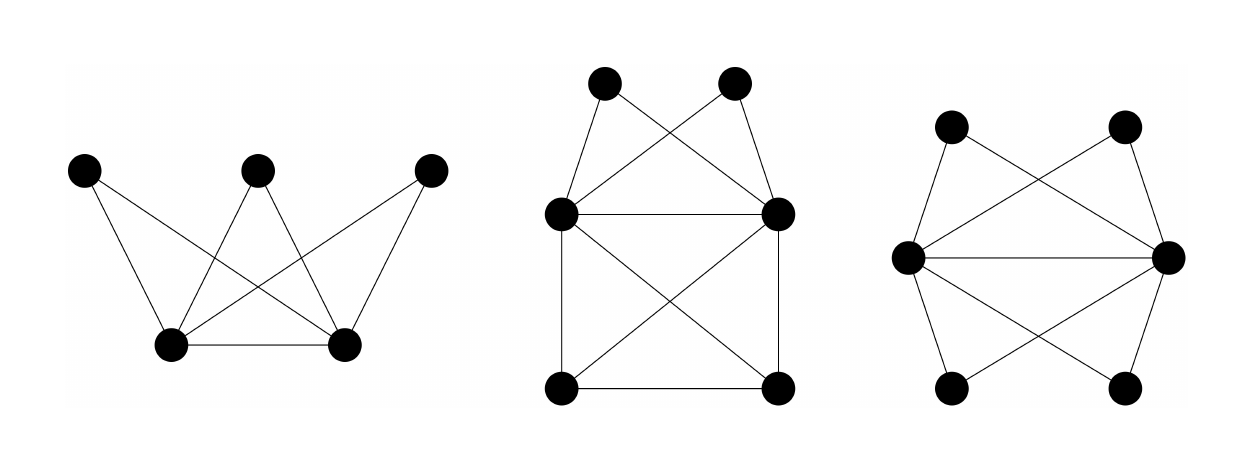} 	
  \caption{Graphs that are not biclique graphs by Proposition~\ref{pULTI}.}
  \label{FigExtraExtra3}
\end{figure}

\section{Vertices of degree two in biclique graphs}

In this section we give a strong property for biclique graphs that have an induced $P_3$ contained in a gem and not in 
a diamond. Also, we show some forbidden structures.
These properties give more tools to recognize graphs that are not biclique graphs. 

The next result implies that the \textit{Haj\'os graph}, the \textit{rising sun} and the $X_1$ graph (see Fig.~\ref{hajosandsun}) are not biclique graphs by giving a forbidden structural property.

\begin{proposition}\label{B3} 
Let $G=KB(H)$ for some graph $H$, let $bb_2b'$ be an induced $P_3$ such that $b,b'$ do not belong to any induced diamond. Let $b_1,b_3$ be the vertices of $G$ corresponding to special bicliques in $H$ such that $\{b,b',b_1,b_2,b_3\}$ induces a gem (Corollary~\ref{coroLoco2}). If $\tilde{b}$ is a vertex such that $\tilde{b}\notin \{b,b',b_1,b_2,b_3\}$
and $\tilde{b}$ does not belong to an induced diamond with $b$, then $\tilde{b}$ is not adjacent to $b_1$.
\end{proposition}
\begin{proof}
Recall that since $b,b'$ do not belong to any induced diamond and are at distance $2$, by Corollary~\ref{coroLoco2}, such vertices $b_1,b_3$ exist.
Let $B,B',B_1,B_2$, $B_3,\tilde{B}$ be the bicliques of $H$ corresponding to the vertices $b,b',b_1,b_2,b_3,\tilde{b}$ of $G$.
Let $xy$ be an edge that belongs to $B\cap B_1$. Note that this edge exists since $B_1$ is a special biclique.
By contradiction, if $\tilde{B}$ and $B_1$ intersect (i.e., $\tilde{b}$ is adjacent to $b_1$ 
in $G$), then either $x$ is adjacent to some vertex $z$ of $\tilde{B}$, or the vertex $y$ is adjacent to some vertex $z$ of $\tilde{B}$. 
In either case, there is an edge between a vertex of $B$ and a vertex of $\tilde{B}$, which is a contradiction by 
Corollary~\ref{coroLoco1}, as $\tilde{b}$ and $b$ would belong to an induced diamond in $G$.
\end{proof}

\begin{corollary} 
The Haj\'os graph, the rising sun and the $X_1$ graph are not biclique graphs (Fig.~\ref{hajosandsun}).
\end{corollary}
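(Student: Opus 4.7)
The corollary follows directly from Proposition~\ref{B3} once we exhibit, in each of the three graphs, a configuration that violates the proposition's conclusion. The plan is, for each of the Haj\'os graph, the rising sun, and $X_1$, to produce: an induced $P_3$ $v_1v_2v_3$ that is contained in an induced gem (with extending vertices $v_4,v_5$, where $v_1v_4$ and $v_3v_5$ are the extra edges) but in no induced diamond, together with a further vertex $v$ which is non-adjacent to $v_1$, does not lie in any induced diamond with $v_1$, yet \emph{is} adjacent to $v_4$. Any such witness contradicts Proposition~\ref{B3} and hence shows that the graph cannot equal $KB(H)$ for any $H$.

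Since none of these three graphs is ruled out already by Theorem~\ref{tMarina} (otherwise a stronger result than Proposition~\ref{B3} would be needed for the corollary), my first step in each case is to verify quickly that every induced $P_3$ lies in a diamond or a gem. Then I enumerate, up to the symmetries of the graph, the induced $P_3$s that do \emph{not} lie in any induced diamond. For each of these the hypothesis of Theorem~\ref{tMarina} forces a gem extension, so the pair $v_4,v_5$ can be identified directly from the adjacency structure. Because each of the three graphs is small and has a visible symmetry (reflection or rotation in the standard drawings), only one or two essentially different choices of $(v_1,v_2,v_3,v_4,v_5)$ need to be considered.

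Once the gem is fixed, the remaining task is to scan the other vertices of the graph for a candidate $v$ that satisfies the three conditions: (i) $v\not\sim v_1$; (ii) there is no induced diamond of the graph containing both $v$ and $v_1$; and (iii) $v\sim v_4$. Condition (i) is immediate from the adjacency list, condition (iii) is a single edge check, and condition (ii) amounts to checking that the common neighborhood of $v$ and $v_1$ together with any non-neighbor of both does not induce a diamond. In each of these three graphs the set of candidates for $v$ is tiny, so the verification reduces to a finite inspection.

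The main obstacle is not any conceptual depth but the bookkeeping: in the Haj\'os graph and the rising sun many induced $P_3$s already sit inside diamonds, so the right $P_3$ and the right gem must be chosen first, and condition (ii) on $v$ is the most error-prone to verify. Once the witness $(v_1,\ldots,v_5,v)$ is identified, Proposition~\ref{B3} yields an immediate contradiction and the three cases of the corollary follow uniformly.
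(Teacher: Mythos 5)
Your approach is exactly the paper's: the corollary is stated without proof precisely because it is a direct application of Proposition~\ref{B3} to a witness configuration in each of the three graphs, which is what you describe. Your plan is sound (e.g.\ in the Haj\'os graph, with inner triangle $a,b,c$ and degree-two vertices $x\sim a,b$, $y\sim b,c$, $z\sim a,c$, the $P_3$ $xaz$ with gem vertices $b,c$ and $v=y$ works), but to be a complete proof you should actually write down such a witness $(v_1,\ldots,v_5,v)$ for each of the three graphs rather than only describing the search procedure.
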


Moreover, these three graphs give forbidden structures for biclique graphs.

\begin{corollary}\label{coroHajos}
Let $G$ be a graph which contains the \textit{Haj\'os graph}, the \textit{rising sun} or the $X_1$ graph as induced subgraph where the vertices of degree $2$ in the subgraph are also of degree $2$ in $G$. Then, $G$ is not a biclique graph. 
\end{corollary}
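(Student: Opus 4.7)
The plan is to promote the preceding corollary from ``$G$ equals one of the three graphs'' to ``$G$ contains one of them as a suitably protected induced subgraph''. I assume for contradiction that $G$ is a biclique graph and that $H \subseteq G$ is an induced copy of the Haj\'os graph, the rising sun, or the $X_1$ graph, with the hypothesis that every vertex of degree two in $H$ retains degree two in $G$. I then apply Proposition~\ref{B3} to $G$ itself, using as witnesses the same vertices $v_1,v_2,v_3,v_4,v_5,v$ that the proof of the previous corollary extracted from $H$ to show that $H$ is not a biclique graph.

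Most of the hypotheses of Proposition~\ref{B3} transfer automatically from $H$ to $G$ because $H$ is induced: the triple $v_1v_2v_3$ is an induced $P_3$ of $G$, the set $\{v_1,\dots,v_5\}$ induces a gem in $G$, the non-adjacency $v \not\sim v_1$ persists, and so does $v \sim v_4$. The only hypotheses that could conceivably fail in $G$ while holding in $H$ are (i) that the $P_3$ $v_1v_2v_3$ lies in a diamond of $G$ because some new vertex $u \in V(G)\setminus V(H)$ is adjacent to all three of $v_1,v_2,v_3$, or (ii) that $\{v_1,v\}$ is the missing edge of some induced diamond of $G$ produced by using vertices of $V(G)\setminus V(H)$ as the two common neighbors.

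The degree-two hypothesis neutralizes both threats. In each of the three forbidden graphs I can pick the Proposition~\ref{B3} witness so that the vertices whose $G$-neighborhoods would need to grow in order to create one of these new diamonds -- chiefly $v_2$ for threat (i), and $v$ together with $v_1$ for threat (ii) -- are of degree two in $H$. By hypothesis they are then of degree two in $G$, so no vertex of $V(G)\setminus V(H)$ can be adjacent to them and no new diamond can be assembled. Hence the full set of hypotheses of Proposition~\ref{B3} holds in $G$ while the conclusion fails (since $v \sim v_4$), a contradiction.

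The main obstacle is the per-graph verification that, in each of the Haj\'os, rising sun, and $X_1$ graphs, a witness configuration exists whose relevant vertices are simultaneously the degree-two vertices of the subgraph; this is a small combinatorial check that can be read off directly from Figure~\ref{hajosandsun} by reusing the configurations identified in the proof of the preceding corollary.
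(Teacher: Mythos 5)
Your overall strategy is exactly the intended one: the paper gives no explicit proof of this corollary, and the implicit argument is precisely that the witness configuration $v_1,\dots,v_5,v$ used to kill the three small graphs via Proposition~\ref{B3} survives inside $G$ because the hypotheses of that proposition are local to the degree-two vertices, which the assumption freezes. Your identification of the two ways a hypothesis could fail in $G$ (a new diamond over the $P_3$ $v_1v_2v_3$, or a new diamond containing $v$ and $v_1$) is also the right checklist.

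One detail is off, though. For threat (i) you say the vertex whose neighborhood must not grow is ``chiefly $v_2$'', and that you can choose the witness so that $v_2$ has degree two in the subgraph. That cannot be arranged: $v_2$ is the universal vertex of the gem $\{v_1,\dots,v_5\}$, so it has degree at least $4$ in the subgraph and the hypothesis says nothing about its degree in $G$. What actually neutralizes threat (i) is that a diamond containing the induced $P_3$ $v_1v_2v_3$ must supply a \emph{common} neighbor of the non-adjacent endpoints $v_1$ and $v_3$, and in each of the three witness configurations both $v_1$ and $v_3$ are degree-two vertices of the subgraph (e.g.\ two ``outer'' vertices of the Haj\'os graph), whose only $G$-neighbors are $v_2$ and $v_4$, resp.\ $v_2$ and $v_5$, neither of which completes a diamond. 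With that substitution the argument closes; the rest of your verification (threat (ii) via the degree-two status of $v$ and $v_1$, and the persistence of the induced $P_3$, the gem, $v\not\sim v_1$ and $v\sim v_4$) is correct.
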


\begin{figure}[ht!]
  \centering
  \includegraphics[scale=.3]{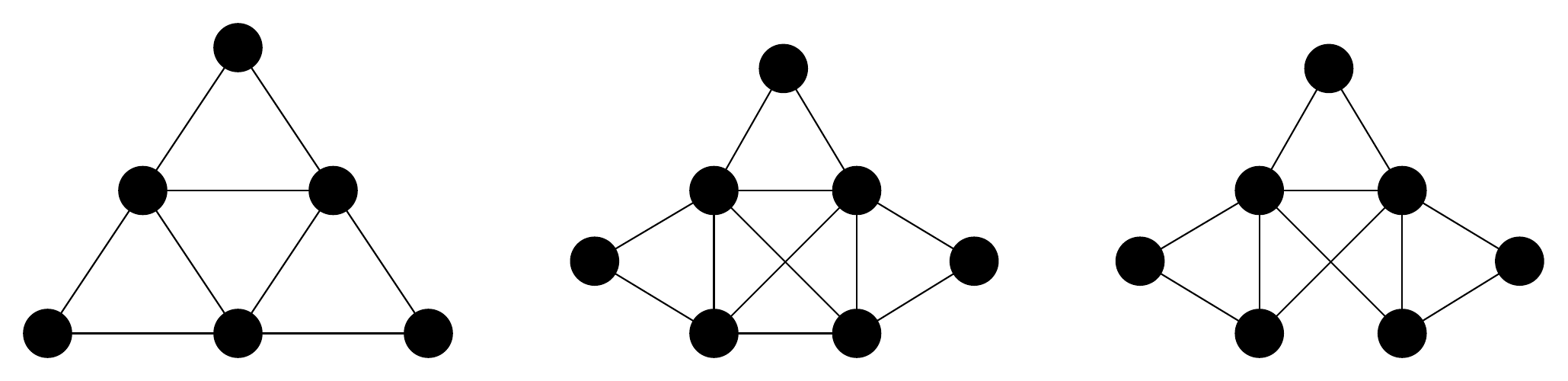} 
  \caption{The \textit{Haj\'os graph}, the \textit{rising sun} and the $X_1$ graph.}
  \label{hajosandsun}
\end{figure}

Next we present the theorem that gives an upper bound on the number of vertices of degree $2$ in a biclique graph.

\begin{theorem}\label{teogrado2}
Let $G=KB(H)$ for some graph $H$, where $G$ is not isomorphic to $K_3$ or the diamond, and $|V(G)|=n$. Then, the number of vertices of degree $2$ in $G$ is strictly less than $n/2$.
\end{theorem}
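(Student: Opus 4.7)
The plan is to convert the degree-$2$ hypothesis into very rigid local structure, use Proposition~\ref{B3} to constrain it further, and finish with a short counting argument. Write $D_2$ for the set of vertices of degree $2$ in $G$.

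First I would establish two reductions for every $v\in D_2$ with $N(v)=\{a,b\}$: (i) $ab\in E(G)$, and (ii) $a,b\notin D_2$. Part (i) is immediate: otherwise $avb$ is an induced $P_3$ whose middle vertex would have degree at least $3$ in any diamond or gem supplied by Theorem~\ref{tMarina}. For part (ii), suppose $v\sim a$ with $v,a\in D_2$; then $b\notin D_2$ and $d(b)\geq 3$, so one can pick $c\in N(b)\setminus\{v,a\}$ and look at the induced $P_3$ $vbc$. It has no diamond extension (the needed fourth vertex must be $a$, but $N(a)=\{v,b\}$ excludes $c$) and, via a short case check exploiting $d(v)=d(a)=2$, no gem extension either, contradicting Theorem~\ref{tMarina}.

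Next I extract the main structural consequence of Proposition~\ref{B3}. Partition $V\setminus D_2=H\sqcup L$, where $H$ is the set of vertices with at least two degree-$2$ neighbors and $L$ the rest. For $u\in H$ pick two distinct degree-$2$ neighbors $v_1,v_2$ with ``other'' neighbors $a_1,a_2$. The induced $P_3$ $v_1uv_2$ is not in any diamond (that would force $a_1=a_2$ and therefore $N(v_1)=N(v_2)$, contradicting Proposition~\ref{pULTI}), so it lies in a gem; checking where $v_1,u,v_2$ can be placed in a gem and using $d(v_1)=d(v_2)=2$ pins this gem down to $\{v_1,a_1,a_2,v_2,u\}$ with $u$ universal and $a_1a_2\in E(G)$. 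Proposition~\ref{B3} applied to $v_1$ in this gem then yields $N(a_1)\subseteq N[u]$; any degree-$2$ neighbor of $a_1$ would therefore lie in $N(u)\cap D_2$ with neighborhood $\{u,a_1\}$ and so coincide with $v_1$ by Proposition~\ref{pULTI}. Thus $a_1\in L$, i.e., every ``partner'' of a vertex of $H$ lies in $L$.

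To finish, let $E^*=\{\{a,b\}\in E(G)\colon\exists\,v\in D_2,\ N(v)=\{a,b\}\}$; Proposition~\ref{pULTI} gives $|E^*|=|D_2|$. The previous paragraph says that no edge of $E^*$ has both endpoints in $H$, and each $u\in L$ is incident to at most one edge of $E^*$ (it has at most one degree-$2$ neighbor). Summing the incidences over $L$ therefore gives $|E^*|\leq|L|$. If $H\neq\emptyset$ then $|L|\leq|V\setminus D_2|-1<|V\setminus D_2|$; if $H=\emptyset$ then the same count refines to $2|D_2|\leq|V\setminus D_2|$, so $|D_2|\leq|V\setminus D_2|/2<|V\setminus D_2|$. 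Either way $|D_2|<|V\setminus D_2|$, which is exactly $|D_2|<n/2$. The main obstacle is paragraph two: Theorem~\ref{tMarina} permits many a priori gem embeddings of the $P_3$ $v_1uv_2$, and only the degree-$2$ constraint on $v_1,v_2$ pins down the precise embedding for which Proposition~\ref{B3} yields the crucial containment $N(a_1)\subseteq N[u]$.
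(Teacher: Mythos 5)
Your proof is correct and follows essentially the same route as the paper's: the same local structure around degree-two vertices (the forced triangle in the neighbourhood, no two adjacent degree-two vertices, no false twins via Proposition~\ref{pULTI}) and the same application of Proposition~\ref{B3} to the unique gem forced around a vertex with two degree-two neighbours. Your $H/L$ partition and edge-count over $E^*$ is the paper's ``private neighbour'' injection in a different dress, with the strictness handled by an explicit case split instead of the paper's one unassociated neighbour.
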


\begin{proof}
Let $V_2 = \{ v \in G : d(v) = 2\}$. First we show that there are no edges between the vertices of $V_2$. 
Suppose by contradiction that $v_i,v_j\in V_2$ are adjacent. If they have a common neighbor, say $w$, since $G$ is not isomorphic to $K_3$, both 
$\{v_i,w,w'\}$ and $\{v_j,w,w'\}$, where $w'$ is any other neighbor of $w$, induce a $P_3$ which is not contained in a diamond or a gem. Otherwise, if the vertex $w$ is adjacent to $v_i$ and not adjacent to $v_j$, then $\{w,v_i,v_j\}$ induces a $P_3$ 
that is not included in a diamond or a gem since no other vertex is adjacent to $v_i$. In both cases we arrive at a contradiction by Theorem~\ref{tMarina}.

Next we show that for each $v_i \in V_2$, there exists a vertex $w_i \in N(v_i)$ such that $w_i \notin N(v_j)$, for all $v_j \in V_2$, $j \neq i$. Clearly, $w_i\notin V_2$.
By contradiction, suppose that there is a vertex $v_i \in V_2$ such that each of
its two neighbors, say $v', v''$, are adjacent to some other vertex of $V_2$. Note again that by Theorem~\ref{tMarina}, $v'$ and $v''$ are adjacent. Consider the following cases:
\begin{itemize}
 \item $N(v_i) = N(v_j) = \{v',v''\}$, for some $v_j \in V_2$. Since $G$ is not isomorphic to the diamond, by Proposition~\ref{pULTI} we arrive at a contradiction. 
 \item $v' \in N(v_j)$ and $v'' \in N(v_k)$, for some $v_j, v_k \in V_2$, $j \neq k$.  Let $v'''$ be the other vertex adjacent to 
 $v_j$. Now, as $\{v_i,v',v_j\}$ induces a $P_3$, then by Theorem~\ref{tMarina}, $\{v_i,v',v_j,v'',v'''\}$ induces a gem (the only one) containing that $P_3$, since $v',v'''$ and $v'',v'''$ should be adjacent. Finally, if the vertices
 $v_i,v',v_j,v'',v''',v_k$ are respectively called $b,b_2,b',b_1,b_3,\tilde{b}$, then by Proposition~\ref{B3} we obtain a contradiction since $v_k$ cannot be adjacent to $v''$. Note that depending on the other neighbor $v''''$ of $v_k$ and their adjacencies, the \textit{Haj\'os graph} ($v'''' = v'''$), the \textit{rising sun} ($v''''$ adjacent to $v'$ and $v'''$) or the 
 $X_1$ graph ($v''''$ adjacent to $v'$ and not adjacent to $v'''$) appears.
\end{itemize}

Therefore, for each vertex of degree $2$, we can associate a unique neighbor of degree
greater than $2$ that is not adjacent to any other vertex of degree $2$. 
Thus, as there is at least one neighbor of a vertex of $V_2$ that is not associated to any vertex, and has degree greater than $2$, it follows that $|V_2| < n/2$.
\end{proof}

As an application of Theorem~\ref{teogrado2}, in Figure~\ref{FigExtraExtra3}, the first and third graphs have more vertices of degree $2$ than vertices of degree greater than $2$. Therefore, they are not biclique graphs. Other two examples are shown in Figure~\ref{ejteogrado2}.

\begin{figure}[ht!]
  \centering
  \includegraphics[scale=.3]{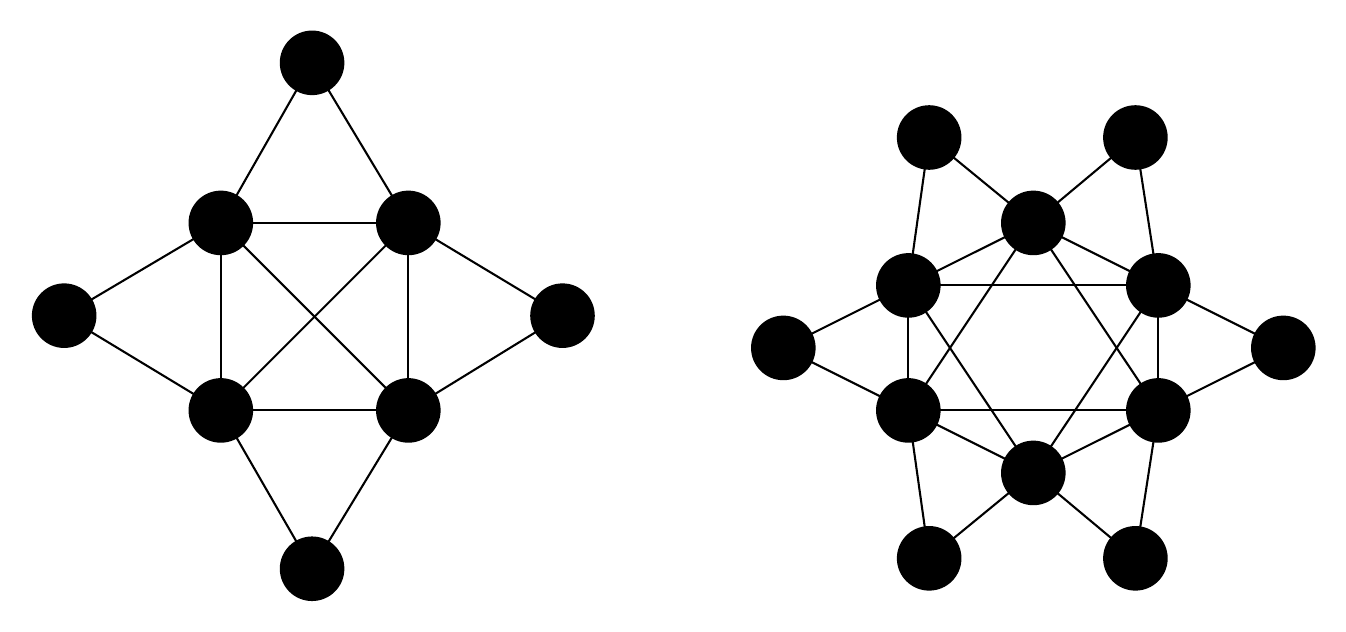} 
  \caption{Graphs that are not biclique graphs by Theorem~\ref{teogrado2}.}
  \label{ejteogrado2}
\end{figure}

As another application of Proposition~\ref{B3}, we obtain the following result.
For that, we need one more definition. A family of sets $\mathcal{A}$ is \textit{Helly} when 
every subfamily of pairwise intersecting subsets has a non-empty intersection.

\begin{proposition}\label{2-helly}
Let $G$ be a biclique graph and let $A = \{ N[v] : v \in G$ and $d(v) = 2\}$. Then $A$ is Helly.
\end{proposition}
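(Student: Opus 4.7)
The plan is to reduce the Helly property for the closed neighborhoods $N[v]$ with $v \in V_2$ (the set of degree-two vertices of $G$) to a Helly statement for the open neighborhoods $N(v)$, and then to exploit both the rigidity of $2$-element sets and Proposition~\ref{B3}. First I would dispose of the trivial cases: if $|V_2| \leq 2$, the family $A$ is vacuously Helly, and if $G = K_3$ every $N[v]$ equals $V(G)$. Otherwise, by the opening step in the proof of Theorem~\ref{teogrado2}, no two vertices of $V_2$ are adjacent; consequently, for distinct $v_i, v_j \in V_2$ we have $N[v_i] \cap N[v_j] = N(v_i) \cap N(v_j)$, so it suffices to prove that $\{N(v) : v \in V_2\}$ is a Helly family.

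For the crucial $3$-Helly step, I would argue by contradiction. Suppose $v_1, v_2, v_3 \in V_2$ pairwise share a neighbor but have no common neighbor, and pick $a \in N(v_1) \cap N(v_2)$, $b \in N(v_2) \cap N(v_3)$, $c \in N(v_1) \cap N(v_3)$. Since each $N(v_i)$ has exactly two elements and no element belongs to all three, the vertices $a, b, c$ must be pairwise distinct, giving $N(v_1) = \{a,c\}$, $N(v_2) = \{a,b\}$, $N(v_3) = \{b,c\}$. The induced $P_3$ on $v_2\,a\,v_1$ lies in no diamond, because $N(v_1) \cap N(v_2) = \{a\}$; by Theorem~\ref{tMarina} it must lie in a gem. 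The only candidates for the two extra gem vertices (adjacent to $v_2$ and $v_1$ respectively, and distinct from $a$) are $b$ and $c$, so the gem is supported on $\{v_1, v_2, a, b, c\}$. A degree count inside this set forces $a$ to be the universal vertex of the gem and the $P_4$ to be $v_2\,b\,c\,v_1$; in particular, $b$ is adjacent to $c$. Now I apply Proposition~\ref{B3} to this gem with the choice $v = v_3$: since $v_3 \not\sim v_2$ (no edges within $V_2$) and $N(v_2) \cap N(v_3) = \{b\}$ shows that $v_3$ is not in a diamond with $v_2$, the conclusion of Proposition~\ref{B3} yields $v_3 \not\sim b$, contradicting $b \in N(v_3)$.

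Finally, I would upgrade $3$-Helly to the full Helly property using only that each $N(v_i)$ has size two. Suppose $\{N(v_1), \ldots, N(v_k)\}$ is pairwise intersecting with empty total intersection, and write $N(v_1) = \{a, b\}$. Since neither $a$ nor $b$ lies in every $N(v_i)$, there exist indices $j, \ell$ with $a \notin N(v_j)$ and $b \notin N(v_\ell)$; pairwise intersection with $N(v_1)$ then forces $b \in N(v_j)$ and $a \in N(v_\ell)$. Applying $3$-Helly to $v_1, v_j, v_\ell$ produces a common neighbor lying in $N(v_1) = \{a, b\}$, but neither $a$ (missing from $N(v_j)$) nor $b$ (missing from $N(v_\ell)$) qualifies, a contradiction. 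The main obstacle is pinning down the forced gem structure in the $3$-Helly step, specifically that $a$ must be the universal vertex and that $b \sim c$; once this is established, Proposition~\ref{B3} closes the argument immediately.
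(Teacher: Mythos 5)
Your proof is correct and follows essentially the same route as the paper: the paper's argument observes that a minimal non-Helly subfamily must consist of three closed neighborhoods whose union induces the Haj\'os graph with its degree-two vertices preserved, which is forbidden by Corollary~\ref{coroHajos} (itself a consequence of Proposition~\ref{B3}), and your triple $v_1,v_2,v_3$ with the triangle $a,b,c$ reconstructs exactly that configuration before applying Proposition~\ref{B3} directly. Your explicit reduction from the full Helly property to the $3$-Helly case and your derivation of the forced gem are just the details the paper compresses into the assertions that $|A'|=3$ and that $A'$ induces the Haj\'os graph.
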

\begin{proof}
Let, to the contrary, $A'$ be a minimal non-Helly subfamily of $A$. Now, since $A'$ is minimal and each $N[v_i] \in A'$ induces
a $K_3$ (as by Theorem~\ref{tMarina}, neighbors of $v_i$ should be adjacent), we have that $|A'|=3$. Moreover, as $A'$ is non-Helly, it induces the \textit{Haj\'os graph} where the vertices of degree $2$ in the subgraph are also of degree $2$ in $G$. Therefore by Corollary~\ref{coroHajos}, $G$ is not a biclique graph, which is a contradiction. We conclude that $A$ is a Helly family.
\end{proof}

\section{Open problems}

In this section, we present some conjectures. We look for proofs or counterexamples.

We propose first the following conjecture that generalizes Proposition~\ref{2-helly}.

\begin{conjecture}
 Let $G$ be a biclique graph and let $A = \{ N[v] : v\in G$ and $v$ is simplicial$\}$. Then $A$ is Helly.
\end{conjecture}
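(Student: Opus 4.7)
The plan is to argue by contradiction, following the scheme of Proposition~\ref{2-helly} but replacing the degree-two hypothesis by simpliciality. Suppose $A$ fails to be Helly, and let $\mathcal{F}=\{N[v_1],\ldots,N[v_k]\}\subseteq A$ be a minimal non-Helly subfamily. Since two pairwise intersecting sets already have a common element, $k\ge 3$, and minimality produces witnesses $w_i\in\bigcap_{j\ne i}N[v_j]\setminus N[v_i]$ for each $i$.

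Next, I would exploit simpliciality: each $N[v_j]$ is a clique. If $w_i=w_{i'}$ for some $i\neq i'$, this common witness would lie in $\bigcap_{j}N[v_j]$, a contradiction; hence the $w_i$'s are pairwise distinct. For any pair $i\neq i'$, choose $j\notin\{i,i'\}$ (possible because $k\ge 3$); then $w_i,w_{i'}$ both lie in the clique $N[v_j]$, so $w_i\sim w_{i'}$, and $\{w_1,\ldots,w_k\}$ induces a $K_k$. A parallel argument shows that the $v_i$'s are pairwise non-adjacent (otherwise $v_i$ and $w_i$ would be forced adjacent inside the clique $N[v_{i'}]$, contradicting $w_i\notin N[v_i]$) and that $v_i\sim w_j$ whenever $i\neq j$. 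Thus $\{v_1,\ldots,v_k,w_1,\ldots,w_k\}$ induces a ``generalized Haj\'os graph'' in $G$: an independent set of $v_i$'s together with a $K_k$ on the $w_j$'s, with $v_iw_j$ an edge precisely when $i\neq j$.

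To finish, I would restrict attention to any three indices (say $1,2,3$), yielding an induced Haj\'os subgraph of $G$ on $\{v_1,v_2,v_3,w_1,w_2,w_3\}$ whose three outer vertices are simplicial in $G$. When the $v_i$ all have degree exactly two, Corollary~\ref{coroHajos} applies at once, recovering Proposition~\ref{2-helly}. Otherwise, at least one $v_i$ carries extra clique-neighbors, which I would handle by applying Proposition~\ref{B3} to the induced $P_3$ $v_1w_2v_3$: any diamond or gem containing this $P_3$ must, by the clique structure of $N(v_1)$ and $N(v_3)$, have its ``apex'' vertex adjacent to every $w_j$; then Proposition~\ref{B3} applied with $v:=v_2$ forces $v_2\not\sim w_1$, contradicting the established Haj\'os edge $v_2w_1$.

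The main obstacle is precisely this last step: upgrading Corollary~\ref{coroHajos} from the degree-two setting to general simplicial vertices. Because the $v_i$'s can carry extra neighbors (all mutually adjacent, and adjacent to the surrounding $w_j$'s) the induced subgraph is not an ``isolated'' Haj\'os in $G$, and Corollary~\ref{coroHajos} cannot be invoked directly. Closing this case will likely require proving, as a preliminary lemma, a \emph{simplicial Haj\'os} forbidden-structure statement for biclique graphs via a careful repeated application of Proposition~\ref{B3}, and then invoking it here.
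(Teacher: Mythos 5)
Note first that the statement you are proving is not proved in the paper at all: it appears in the ``Open problems'' section as a conjecture explicitly offered as a generalization of Proposition~\ref{2-helly}, with the authors asking for a proof or a counterexample. So there is no paper proof to match; what matters is whether your argument actually closes the conjecture, and it does not.

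The structural part of your argument is sound and is genuine progress. Taking a minimal non-Helly subfamily, extracting the witnesses $w_i\in\bigcap_{j\ne i}N[v_j]\setminus N[v_i]$, and using that each $N[v_j]$ is a clique (simpliciality) to force the $w_i$ to be pairwise adjacent, the $v_i$ to be pairwise non-adjacent, and $v_i\sim w_j$ exactly when $i\ne j$, correctly produces an induced Haj\'os configuration on $\{v_1,v_2,v_3,w_1,w_2,w_3\}$. This is exactly the skeleton of the paper's proof of Proposition~\ref{2-helly}. The gap is the one you yourself flag: Corollary~\ref{coroHajos} forbids an induced Haj\'os graph only when its three degree-two vertices have degree two \emph{in $G$}, and a simplicial vertex can have arbitrarily large degree, so the corollary is not applicable. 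Your proposed repair via Proposition~\ref{B3} does not work as stated: that proposition requires the induced $P_3$ (say $v_1w_2v_3$) \emph{not} to be contained in a diamond, requires identifying the specific gem vertices $v_4,v_5$, and requires the vertex $v:=v_2$ not to lie in a diamond with $v_1$ --- none of these hypotheses is verified, and your claim that any apex of such a diamond or gem ``must be adjacent to every $w_j$'' is asserted, not derived. Since the extra neighbors of a simplicial $v_i$ sit inside the clique $N[v_i]$ and may create exactly the diamonds that void Proposition~\ref{B3}'s hypotheses, the case analysis you defer is the entire difficulty. As written, the argument proves only Proposition~\ref{2-helly} again; the conjecture remains open.
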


Proposition~\ref{pULTI} can be extended leading to the following conjecture.

\begin{conjecture}\label{nokb_general}
Let $G=KB(H)$ for some graph $H$, where $G$ is not isomorphic to the diamond. Then, there do not exist $v_{1},v_{2},\ldots,v_{i} \in V(G)$ such that 
$N{(v_{1})}=N{(v_{2})}=\ldots =N{(v_{i})}$ and their neighbors are a subgraph of $K_{i}$ for $i \geq 2$.
\end{conjecture}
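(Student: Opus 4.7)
The plan is to proceed by induction on $i \geq 2$, with the base case $i=2$ being exactly Proposition~\ref{pULTI}. For the inductive step, suppose $v_1,\ldots,v_i \in V(G)$ share a common neighborhood $K = N(v_1) = \ldots = N(v_i)$ contained in some $K_i$ of $G$. If $v_j$ and $v_l$ were adjacent for some $j \neq l$, we would have $v_l \in N(v_j) = N(v_l)$, which is impossible; so the $v_j$'s form an independent set. If $|K| \leq i-1$, removing from a $K_i \supseteq K$ any vertex outside $K$ yields a $K_{i-1}$ that still contains $K$, and the inductive hypothesis applied to $v_1,\ldots,v_{i-1}$ (valid because $G \neq diamond$) gives an immediate contradiction. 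Thus I focus on the tight case $|K|=i$, where $K = \{u_1,\ldots,u_i\}$ induces $K_i$.

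In the tight case, let $B_1,\ldots,B_i$ and $U_1,\ldots,U_i$ be the bicliques of $H$ corresponding to $v_1,\ldots,v_i$ and $u_1,\ldots,u_i$, respectively. The adjacencies of the induced subgraph of $G$ on $\{v_1,\ldots,v_i\}\cup K$ translate to the following intersection facts in $H$: (i) $B_j \cap B_l = \emptyset$ for $j \neq l$; (ii) $U_k \cap U_l \neq \emptyset$ for $k \neq l$; and (iii) $B_j \cap U_k \neq \emptyset$ for all $j,k$. Since the neighbors of $v_j$ in $G$ correspond exactly to the bicliques of $H$ distinct from $B_j$ that intersect $B_j$, the bicliques of $H$ meeting both $B_j$ and $B_l$ for any fixed $j \neq l$ are precisely $U_1,\ldots,U_i$; in particular, there are exactly $i$ of them. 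By Lemma~\ref{l4}, $d_H(B_j,B_l) \in \{1,2\}$ for every such pair. Mimicking the proof of Proposition~\ref{pULTI}, I would then enumerate the local configurations around each pair $(B_j,B_l)$ permitted by the case analysis in the proof of Theorem~\ref{teodistk}, keeping only those yielding exactly $i$ common-intersecting bicliques. In each such admissible configuration the goal is to exhibit a further biclique of $H$ that either meets some single $B_j$---yielding a neighbor of $v_j$ outside $K$---or meets several of the $B_j$'s simultaneously---yielding a common neighbor of the $v$'s outside $\{u_1,\ldots,u_i\}$; either outcome contradicts $N(v_1)=\ldots=N(v_i)=K$.

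The main obstacle is the combinatorial explosion of this last step: the local structure around each of the $\binom{i}{2}$ pairs $(B_j,B_l)$ must be simultaneously compatible, and the number of admissible patterns grows with $i$, so a purely case-based attack threatens to become unmanageable. The real work lies in finding a unified argument that exploits conditions (i)--(iii) globally. A promising avenue is to examine the ``incidence'' between $\{B_1,\ldots,B_i\}$ and $\{U_1,\ldots,U_i\}$ inside $H$: the $U_k$'s pairwise intersect while jointly meeting $i$ pairwise disjoint bicliques, which intuitively forces a large bipartite-complete substructure in $H$. By the maximality of bicliques, this substructure must either be subsumed by one of the $B_j$'s or $U_k$'s---impossible because the forced vertices cannot all lie within a single biclique---or form a new maximal biclique whose associated vertex in $KB(H)$ is a common neighbor of several $v_j$'s outside $K$, giving the sought contradiction. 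Making this step precise, perhaps via a Helly-type observation on the $U_k$'s or a careful analysis of the bipartitions they share, is where I expect the bulk of the effort to lie.
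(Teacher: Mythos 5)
The statement you are trying to prove is not a theorem of the paper: it is posed explicitly as Conjecture~\ref{nokb_general}, an open problem that the authors only motivate as an ``extension'' of Proposition~\ref{pULTI} (the case $i=2$). So there is no proof in the paper to compare against, and the question is whether your proposal actually settles the conjecture. It does not. Your preliminary reductions are sound: the $v_j$'s are pairwise non-adjacent, and the case $|K|\leq i-1$ does fall to the inductive hypothesis applied to $v_1,\ldots,v_{i-1}$ (with the harmless caveat that the base case also needs Lemma~\ref{l2} to rule out $|K|=1$, since Proposition~\ref{pULTI} only speaks of a neighborhood inducing a $K_2$). The translation of the adjacency data into intersection facts (i)--(iii) in $H$, and the observation that the bicliques meeting both $B_j$ and $B_l$ are exactly $U_1,\ldots,U_i$, are also correct.

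The gap is the tight case $|K|=i$, which is the entire content of the conjecture beyond what Proposition~\ref{pULTI} already gives. You acknowledge this yourself: the ``unified argument'' exploiting (i)--(iii) globally is described only as a promising avenue. The claim that the $U_k$'s pairwise intersecting while each meeting $i$ pairwise disjoint bicliques ``intuitively forces a large bipartite-complete substructure in $H$'' is not established --- pairwise intersection of bicliques does not by itself produce any common vertices or any complete bipartite pattern (this is exactly the kind of Helly-type statement the paper itself only conjectures elsewhere). Likewise, the proposed dichotomy (the substructure is subsumed by an existing biclique, or it extends to a new maximal biclique meeting several $B_j$'s) is asserted, not proved; maximal bicliques containing a given bipartite-complete subgraph need not intersect any prescribed $B_j$ in a way that changes $N(v_j)$. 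Note also a technical point you would have to handle even in a case analysis: for $d_H(B_j,B_l)=2$, Theorem~\ref{teodistk} only guarantees bicliques at distance at most $1$ from $B_j$ and $B_l$, not bicliques intersecting both, so the enumeration ``keeping only those yielding exactly $i$ common-intersecting bicliques'' cannot be driven by that theorem alone. As it stands, your text is a plausible research plan for attacking the conjecture, not a proof of it.
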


\section{Conclusions}
In this work we give a formula for the distances between vertices in the biclique graph $KB(G)$ using the distances between bicliques of $G$. This is a useful tool for proving structural properties in bicliques graphs. In particular, it allows us to give a different proof for the necessary condition for a graph to be a biclique graph given 
in~\cite{GroshausSzwarcfiterJGT2010}. Also, it is used to answer (negatively) the question about the condition being sufficient or not.

Finally, we give an upper bound on the number of vertices of degree $2$ for biclique graphs. Also we give some forbidden structures that are useful to recognize graphs which are not biclique graphs.

\section*{Appendix}

In this section we present a complete list of biclique graphs up to $6$ vertices. For those graphs that verify the necessary condition of biclique graphs, we have checked by the computer whether they are biclique graphs using the characterization given in~\cite{GroshausSzwarcfiterJGT2010}.

\begin{figure}[ht!]
  \centering
  \includegraphics[scale=.27]{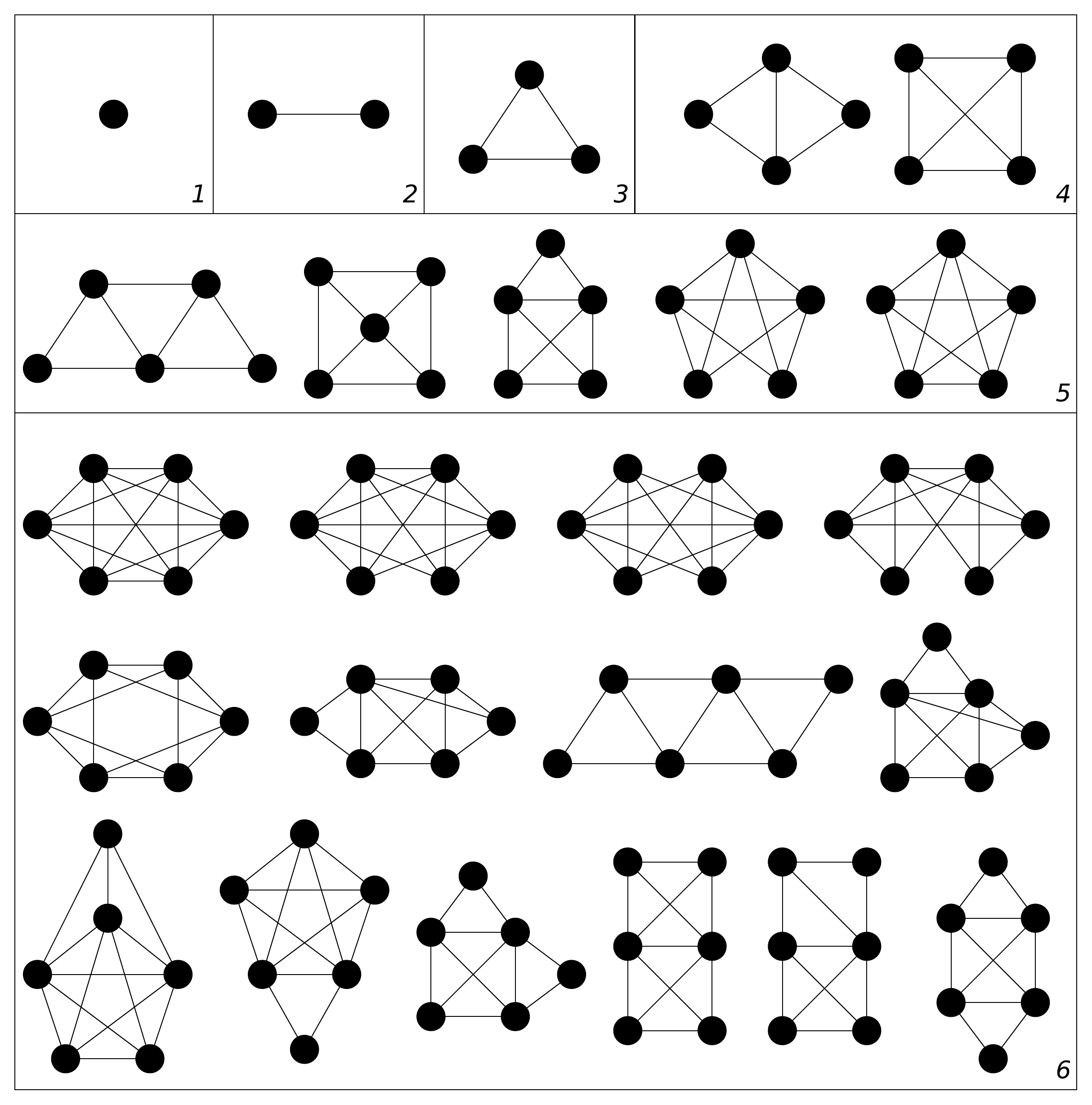} 
  \caption{Biclique graphs up to $6$ vertices.}
  \label{grafoskb}
\end{figure}

\bibliographystyle{abbrv}
\bibliography{biblio}

\end{document}